\documentclass[thmsa]{article}
\usepackage{amssymb}
\usepackage{amsmath,amsthm,amsfonts}  
\usepackage[applemac]{inputenc}  % les accents directement
\usepackage{graphicx}
\usepackage{amssymb}
\usepackage{amsmath}
\usepackage{amsthm}
\usepackage{color}
%%%%%%%%%%%%%%%%%%%%%%%%%%%%%%%%%%%%%
\newtheorem{defi}{Definition}[section]
\newtheorem{prop}{Proposition}[section]
\newtheorem{lem}{Lemma}[section]
\newtheorem{theo}{Theorem}[section]
\newtheorem{coro}{Corollary}[section]

%%%%%%%%%%%%%%%%%%%%%%%%%%%%%%%%%%%%
\def\eqdef{\stackrel{\mbox{\tiny def}}{=}}     % egal a ... par definition
\newcommand{\ket}[1]{|\kern.3ex#1\kern.3ex\rangle}
\newcommand{\bra}[1]{\langle\kern.3ex #1 \kern.3ex|}
\newcommand{\scalar}[2]{\langle\kern.3ex #1 \kern.3ex|\kern.3ex#2\kern.3ex\rangle}

\begin{document}

\author{
E.M.F.  Curado$^{\mathrm{a,b,c}}$,  J. P. Gazeau$^{\mathrm{b}}$, 
 Ligia M.C.S. Rodrigues$^{\mathrm{a}}$(\footnote{e-mail:
 evaldo@cbpf.br,
gazeau@apc.univ-paris7.fr, ligia@cbpf.br} )\\
\emph{$^{\mathrm{a}}$ Centro Brasileiro de Pesquisas Fisicas,}\\
\emph{Rua Xavier Sigaud 150, 22290-180 - Rio de Janeiro, RJ, Brazil}\\
\emph{$^{\mathrm{b}}$ Laboratoire APC,
Univ Paris  Diderot, Sorbonne Paris Cit\'e, }   
\emph{75205 Paris, France} \\
\emph{$^{\mathrm{c}}$ Instituto Nacional de Ciencia e Tecnologia - Sistemas Complexos}}

\title{Generating functions for generalized binomial distributions}
\author{
H. Bergeron$^{\mathrm{a}}$, 
E.M.F.  Curado$^{\mathrm{b,c,d}}$,  
J. P. Gazeau$^{\mathrm{d}}$, 
 Ligia M.C.S. Rodrigues$^{\mathrm{b}}$(\footnote{e-mail: herve.bergeron@u-psud.fr, 
 evaldo@cbpf.br,
gazeau@apc.univ-paris7.fr, ligia@cbpf.br} )\\
\emph{$^{\mathrm{a}}$
Universit\'e Paris Sud, ISMO, UMR 8214 du CNRS, B\^at. 351, F-91405, Orsay, France} \\
\emph{   $^{\mathrm{b}}$ Centro Brasileiro de Pesquisas Fisicas   } \\
\emph{   $^{\mathrm{c}}$ Instituto Nacional de Ci\^encia e Tecnologia - Sistemas Complexos}\\
\emph{  Rua Xavier Sigaud 150, 22290-180 - Rio de Janeiro, RJ, Brazil  } \\
\emph{  $^{\mathrm{d}}$ Laboratoire APC,
Univ Paris  Diderot, Sorbonne Paris Cit\'e,  }  
\emph{75205 Paris, France   } 
}

\maketitle
{\abstract{ 
In a recent article a generalization of the binomial distribution associated with a sequence of positive numbers was examined. The analysis of the nonnegativeness of the formal expressions was a key-point to allow to give them   a statistical interpretation in terms of probabilities.  In this article we present an approach based on generating functions that solves the previous difficulties: the constraints of nonnegativeness are automatically fulfilled, a complete characterization in terms of generating functions is given and a large number of analytical examples becomes available.
}}

\vspace{0.5cm}
\noindent
{\bf Keywords:} Quantum information, Deformed binomial  distribution; Deformed Poisson distribution  \\
{\bf PACS:}

\tableofcontents

\section{Introduction}
\label{intro}

In most of the realistic models
 of  Physics one must take correlations into account  and events which are usually presented as independent, like in a binomial Bernoulli process, are actually submitted to correlative perturbations. These perturbations lead to deformations of the mathematical independent laws. 
Indeed, 
the deformation of the Poisson distribution upon which is based the construction of Glauber coherent states in quantum optics leads to the so-called nonlinear coherent states 
\cite{dodonov2002,gazeaubook}. 
The realization of a special class of these states, adapted to this deformation, has been proposed in the quantized motion of a trapped atom in a
Paul trap \cite{matosetal1996,kisetal2001}. 
In a recent work \cite{curadoetal2010},  
some of the present authors 
have examined the possibilities of using a certain class of such nonlinear or non-poissonian coherent states in quantum measurements. In the case of imperfect detection,  
the underlying binomial 
distribution was subsequently deformed, raising interesting and non-trivial questions on the statistical content of such sequences.   These questions were examined in a previous note \cite{curadoetal2011}, 
where a binomial-like 
distribution associated to an arbitrary strictly increasing sequence of positive numbers was defined. This distribution was ``formal'' because it was not always positive
 but it was proved that for a certain class of such sequences the proposed 
 binomial-like 
 distribution was nonnegative and possesses a Poisson-like limit law. 
 The question of nonnegativeness in the general case was also discussed and the conditions under which a
probabilistic interpretation can be given to the proposed 
binomial-like 
distribution was examined. 
Other generalizations of the binomial distribution, using different approaches, exist in the literature, see for example \cite{thurner2009,tsallis2012} and references therein. 

In the present note we give a different approach based on generating functions that allows to avoid the previous difficulties: the constraints of nonnegativeness are automatically fulfilled and a great number of analytical examples become available.

In other words we are looking for sequences $x_n$ of positive numbers (except $x_0=0$) such that the associated 
binomial 
deformations remain probability distributions.  
The question is solved in theorem 2.4 where the generating functions associated with these sequences are characterized. 
	
This note is organized as follows. In section \ref{genbernoulli} we present the generalized binomial 
distributions and its generating functions, 
as well as the generalized Poisson distribution; 
 we also develop a comprehensive formalism 
 about some sets of 
sequences that allow a complete probabilistic interpretation of the polynomials from which the distributions are built.  
In section \ref{examples} we present some analytical and numerical examples that illustrate the theoretical framework developed in section \ref{genbernoulli}. 
In section \ref{applications} we discuss possible physical applications of these deformed distributions. In section \ref{conclusions} we present our final comments.  

%%%%%%%%%%%%%%%%%%%%%%%%%%
\section{The generalized binomial distribution and its generating functions}
\label{genbernoulli}
\subsection{The 
binomial-like distribution}
In a process of a sequence of n independent trials with two possibles outcomes, “win” and “loss”, the probability of obtaining $k$ wins is given by the binomial distribution $p_k^{(n)}(\eta)$
\begin{equation}
p_k^{(n)}(\eta) =\left(\begin{array}{c}n \\k \end{array}\right) \eta^k \,(1-\eta)^{n-k}=\dfrac{n!}{(n-k)!\, k!} \,\eta^k \,(1-\eta)^{n-k}  ,
\end{equation}
where the parameter $0 \le \eta \le 1$ is the probability of having the outcome ``win'' and $1-\eta$, the outcome ``loss''.
Therefore, we can assert that the 
binomial distribution above corresponds to the sequence of non-negative integers $n \in \mathbb{N}$.

In a recent article \cite{curadoetal2011} 
it was proposed to generalize the binomial law using an increasing sequence of nonnegative real numbers $\{ x_n \}_{n \in \mathbb{N}}$, where by convention $x_0=0$. The new ``formal probabilities" $\mathfrak{p}_k^{(n)}(\eta)$ are defined as
\begin{equation}
\label{pgotico}
\mathfrak{p}_k^{(n)}(\eta)=\dfrac{x_n!}{x_{n-k}! x_k!} \eta^k p_{n-k}(\eta)  ,
\end{equation}
where the factorials $x_n!$ are given by
\begin{equation}
x_n!=x_1 x_2 \dots x_n, \quad x_0! \eqdef 1,
\end{equation}
and where the polynomials $p_n(\eta)$ are constrainted by the normalization
\begin{equation}
\label{somapgotico}
\forall n \in \mathbb{N}, \quad \sum_{k=0}^n \mathfrak{p}_k^{(n)}(\eta)= 1.
\end{equation}
The polynomials $p_n(\eta)$ can be obtained by recurrence from this relation, the first ones being $p_0(\eta)=1$ and $p_1(\eta)=1-\eta$. The $p_n=\mathfrak{p}_0^{(n)}$ possess a probabilistic interpretation as long as $\mathfrak{p}_k^{(n)}$ are themselves probabilities. This latter condition holds only if the $p_n(\eta)$ are nonnegative. Therefore nonnegativeness of the $p_n$ is a key-point to preserve statistical interpretations. In \cite{curadoetal2011} 
the authors have been able to exhibit one single analytical example ($q$-brackets and their
corresponding $q$-binomial or \textit{Gaussian coefficients}) fulfilling this constraint for all $\eta \in [0,1]$. In the present note we show that an approach based on generating functions allows to completely characterize these cases, giving easily an infinite number of analytical examples. \\
Since the data of a sequence $\{x_n\}_{n \in \mathbb{N}}$ completely specifies the problem in each case, and since we are only interested in the situation where the polynomials $p_n(\eta)$ are always nonnegative for $\eta \in [0,1]$, we introduce the following definition

\begin{defi}
A sequence $\{x_n\}_{n \in \mathbb{N}}$ such that $x_0=0$ and $x_n >0$ (the sequence being increasing or not) is said to be of Complete Statistical Type (CST) if the polynomials $p_n(\eta)$ are 
positive for $\eta \in [0,1[$.
\end{defi}
The purpose of this article is to show the existence of such sequences in 
function of their generating functions as well as to present some non-trivial 
examples.

%%%%%%%%%%%%%%%%%%%%%%%%%%
\subsection{The generating function point of view}
In \cite{curadoetal2011} 
it was proved that the analytical function 
\begin{equation}
\label{eqn:XnGFct}
\mathcal{N}(t)=\sum_{n=0}^\infty \dfrac{t^n}{x_n!}
\end{equation}
gives the generating function of  $p_k(\eta)$ (independently of the 
nonnegativeness 
question) thanks to
\begin{equation}
\label{eqn:PnGFct}
G_{\mathcal{N},\eta}(t) =
\dfrac{\mathcal{N}(t)}{\mathcal{N}(\eta t)}=\sum_{k=0}^\infty \dfrac{t^k}{x_k!} p_k(\eta) .
\end{equation}
Furthermore, introducing the coefficients $I_n$ defined as
\begin{equation}
\label{eqn:InGFct}
\dfrac{1}{\mathcal{N}(t)}=\sum_{n=0}^\infty \dfrac{I_n}{x_n!} (-t)^n,
\end{equation}
the polynomials $p_n$ take the form
\begin{equation}
\label{eqn:PnExpr}
p_n(\eta)=\sum_{k=0}^n \left(\begin{array}{c}x_n \\x_k \end{array}\right) I_k (-\eta)^k,
\end{equation}
where we introduced the notation
\begin{equation}
\left(\begin{array}{c}
x_n \\ 
x_k 
\end{array}\right)=\dfrac{x_n!}{x_k!\,  x_{n-k}!}.
\end{equation}
We deduce from these formula that the knowledge of the analytical functions 
$\mathcal{N}(t)$ of Eq.\eqref{eqn:XnGFct}, $\mathcal{N}(t)^{-1}$ of Eq.\eqref{eqn:InGFct} and $\mathcal{N}(t)/\mathcal{N}(\eta t)$ of Eq.\eqref{eqn:PnGFct} are sufficient to define the coefficients $x_n!$ and the polynomials $p_n$, then the problem can be recast differently only using generating functions. But we first introduce a set $\Sigma_0$ of analytical functions that will be very useful in the sequel.

An interesting remark is that we can also give 
the generating function of the new formal probabilities  $\mathfrak{p}_k^{(n)}$
by the following formula 
\begin{equation}
\dfrac{\mathcal{N}(z \eta t) \mathcal{N}(t)}{\mathcal{N}(\eta t)}=\sum_{n=0}^\infty \dfrac{t^n}{x_n!} \sum_{k=0}^n z^k \mathfrak{p}_k^{(n)}(\eta)\, ,
\end{equation}
where $|t|$ is assumed to be sufficiently small to guarantee the convergence of the series. The proof is straightforward using simple manipulations on the indexes of the sums. 

Let us now introduce a set $\Sigma_0$ of analytical functions that plays a central role in the sequel. 

\begin{defi} In the following $\Sigma_0$ is the set of entire series $f(z)=\sum_{n=0}^\infty a_n z^n$ possessing a non-vanishing radius of convergence and verifying the conditions $a_0=0$, $a_1 >0$ and $\forall n \ge 2, \, a_n \ge 0$.
\end{defi}
The following obvious examples of functions of $\Sigma_0$ are useful in the remainder.

\begin{lem}
\label{lem:SetOm0}
$\forall a,b >0$, $\forall \alpha \in [0,1]$, the functions $F$ defined as $F(t)=e^{a t}-1$, $F(t)=-a \ln(1-b t)$, $F(t)=a \ln \dfrac{1+ \alpha b t}{1-b t}$ belong to $\Sigma_0$.
\end{lem}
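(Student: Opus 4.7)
The plan is to verify the three membership conditions of $\Sigma_0$ (non-vanishing radius of convergence, vanishing constant term, strictly positive linear coefficient, and nonnegative higher coefficients) directly by computing the Taylor expansion of each of the three functions and reading off the coefficients.

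First, for $F(t)=e^{at}-1$ with $a>0$, I would simply write
\begin{equation*}
e^{at}-1=\sum_{n=1}^{\infty}\frac{a^n}{n!}\,t^n,
\end{equation*}
which has infinite radius of convergence, $a_0=0$, $a_1=a>0$, and $a_n=a^n/n!>0$ for $n\geq 2$. Next, for $F(t)=-a\ln(1-bt)$ with $a,b>0$, the standard expansion yields
\begin{equation*}
-a\ln(1-bt)=a\sum_{n=1}^{\infty}\frac{b^n}{n}\,t^n,
\end{equation*}
with radius of convergence $1/b$, $a_0=0$, $a_1=ab>0$, and $a_n=ab^n/n>0$ for $n\geq 2$.

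The only computation requiring minor care is the third function. I would split
\begin{equation*}
a\ln\frac{1+\alpha b t}{1-bt}=a\ln(1+\alpha b t)-a\ln(1-bt),
\end{equation*}
and combine the two standard logarithmic series to get
\begin{equation*}
a\ln\frac{1+\alpha b t}{1-bt}=\sum_{n=1}^{\infty}\frac{ab^n}{n}\bigl(1+(-1)^{n-1}\alpha^n\bigr)\,t^n.
\end{equation*}
The radius of convergence is $1/b$ (the smaller of $1/b$ and $1/(\alpha b)$ when $\alpha>0$, and $1/b$ when $\alpha=0$). Then $a_0=0$ and $a_1=ab(1+\alpha)>0$ since $a,b>0$ and $1+\alpha\geq 1$. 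For $n\geq 2$, the factor $1+(-1)^{n-1}\alpha^n$ equals $1+\alpha^n>0$ for odd $n$ and $1-\alpha^n\geq 0$ for even $n$, the latter being nonnegative precisely because $\alpha\in[0,1]$.

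There is no real obstacle here: the argument is a direct expansion. The only subtle point is the sign bookkeeping in the third case, where the hypothesis $\alpha\in[0,1]$ (rather than $\alpha>0$ arbitrary) is exactly what is needed to ensure that the even-indexed coefficients $1-\alpha^n$ remain nonnegative, so the restriction on $\alpha$ in the statement is tight.
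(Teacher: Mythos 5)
Your proof is correct and takes exactly the approach the paper intends: the paper's proof simply states that "the series expansions verify the required conditions," and you have supplied precisely those expansions and the coefficient checks, including the one genuinely delicate point (that $\alpha\in[0,1]$ is what keeps the even-indexed coefficients $1-\alpha^n$ nonnegative in the third case).
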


\begin{proof}
It is straightforward to show that the series expansions verify the required conditions. 

\end{proof}
\vspace{0.5cm}

\noindent Some simple properties of $\Sigma_0$ are interesting for our purpose:
\begin{prop}
\label{theo:SetOm0}
Properties of $\Sigma_0$
\begin{enumerate}
\item \label{prop1a}  $\forall F,G \in \Sigma_0, F+G \in  \Sigma_0$,

\item \label{prop1b} $\Sigma_0$ is a convex set, 

\item  \label{prop1c} $\forall F \in \Sigma_0, \, \forall \eta \in [-1,1[, \,  t \mapsto F(t)-F(\eta t) \in \Sigma_0 \, {\rm and} \, t \mapsto F(t)+F(-\eta t) \in \Sigma_0$, 

\item \label{pro1d} $\forall F,G \in \Sigma_0, \, \forall a >0, \, (a+F)G \in \Sigma_0$,

\item \label{prop1e} $\forall F,G \in \Sigma_0, \, F \circ G \in \Sigma_0$.

\end{enumerate}
\end{prop}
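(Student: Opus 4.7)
My plan is to verify each of the five properties by direct inspection of Taylor coefficients, since $\Sigma_0$ is defined by sign conditions on the coefficients together with a convergence requirement. Throughout I would write $F(t) = \sum_{n \geq 1} a_n t^n$ and $G(t) = \sum_{n \geq 1} b_n t^n$ with $a_1, b_1 > 0$ and $a_n, b_n \geq 0$ for $n \geq 2$.

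Properties (1) and (2) should be almost immediate. The coefficients of $F+G$ are $a_n + b_n$, which satisfy the required sign pattern, and the series converges on the smaller of the two disks of convergence. For convexity I would observe that $\lambda F \in \Sigma_0$ for any $\lambda > 0$, so a convex combination $\lambda F + (1-\lambda) G$ with $\lambda \in (0,1)$ is a sum of two elements of $\Sigma_0$ (the endpoints being trivial), reducing (2) to (1).

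For (3) I would compute $F(t) - F(\eta t) = \sum_{n \geq 1} a_n (1 - \eta^n)\, t^n$: the coefficient at $n=1$ equals $a_1(1-\eta) > 0$ since $\eta \in [-1,1[$, and for $n \geq 2$ both $a_n \geq 0$ and $1 - \eta^n \geq 1 - |\eta|^n \geq 0$. The companion identity $F(t) + F(-\eta t) = \sum_{n \geq 1} a_n \bigl(1 + (-\eta)^n\bigr)\, t^n$ is handled exactly the same way, the linear coefficient again being $a_1(1-\eta)$. For (4), expanding $(a + F)G = aG + FG$ suffices: the Cauchy product $FG$ has vanishing coefficients at orders $0$ and $1$ (since both $F$ and $G$ start at order $1$) and nonnegative coefficients at higher order, while $aG$ supplies the linear term $a b_1 t$ with $a b_1 > 0$ and preserves nonnegativity elsewhere.

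The main obstacle will be (5), where I need to make sense of the composition $F \circ G$ as a convergent power series near the origin. Since $G(0)=0$ and $G$ is continuous there, I would choose $r>0$ small enough that $\sum_{n \geq 1} b_n r^n < \rho_F$, where $\rho_F$ is the radius of convergence of $F$; this renders $F(G(z))$ absolutely convergent for $|z| \leq r$. Once convergence is secured, I would write $F(G(z)) = \sum_{n \geq 1} a_n\, G(z)^n$. Each $G(z)^n$ has nonnegative Taylor coefficients (obtained as sums of products of nonnegatives) and is of order $n$ with leading coefficient $b_1^n > 0$; summing with the nonnegative weights $a_n$, only the $n=1$ term contributes to the linear coefficient of $F \circ G$, yielding $a_1 b_1 > 0$, while higher coefficients remain nonnegative. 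This is the sole place where convergence requires a genuine argument; all remaining items reduce to elementary coefficient bookkeeping.
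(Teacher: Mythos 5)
Your proof is correct and follows essentially the same coefficient-by-coefficient route as the paper's own argument, including the identical identities $F(t)-F(\eta t)=\sum_n a_n(1-\eta^n)t^n$ for point (3) and the expansion $F(G(t))=\sum_n a_n G(t)^n$ for point (5). The only substantive difference is that you explicitly justify the convergence of the composed series in (5) by shrinking the radius so that $\sum_n b_n r^n$ stays below the radius of convergence of $F$, a point the paper's proof leaves implicit.
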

 
\begin{proof}
The  points (1) and (2) are obvious. The  point (\ref{prop1c}) is due to the identities 
\begin{equation}
F(t)-F(\eta t)= \sum_{n=1}^\infty (1-\eta^n)a_n t^n \quad {\rm  if } \quad F(t)=\sum_{n=0}^\infty a_n t^n
\end{equation}
and
\begin{equation}
F(t)+F(-\eta t)= \sum_{n=1}^\infty (1+(-\eta)^n)a_n t^n \quad {\rm  if } \quad F(t)=\sum_{n=0}^\infty a_n t^n.
\end{equation}
For the  point (4), let us assume $F(t)=\sum_{n=0}^\infty c_n t^n$ and $G(t)=\sum_{n=0}^\infty d_n t^n$, then 
\begin{equation}
H(t)=(a+F(t))G(t)=\sum_{n=0}^\infty u_n t^n \quad {\rm with } \quad u_n=a d_n + \sum_{k=0}^n c_k d_{n-k} .
\end{equation}
Since $a>0 $ and 
$c_k, d_k \ge 0$ for $k \geq 2$ 
then $u_n \ge 0$ for $n \geq 2$. Moreover $u_0=(a+c_0)d_0=0$ because $c_0=d_0=0$, and $u_1=c_1 d_0+(a+c_0)d_1=a d_1 > 0$ because $a,d_1 >0$. Then $H \in \Sigma_0$.\\
For the point (5), assuming the same expansions of $F(t)$ and $G(t)$ as above we have
\begin{equation}
H(t)=(F \circ G)(t)=\sum_{n=0}^\infty c_n G(t)^n
\end{equation}
Since $G(t)=\sum_{n=1}^\infty d_n t^n$ with $d_1 > 0$ and 
$d_n \ge 0$ for all $n \geq 2$, 
the coefficients of $t^k$ in the expansion of $G(t)^n$ are all nonnegative, and then $H(t)=\sum_{n=0}^\infty u_n t^n$ with $u_n \ge 0$. Moreover $H(0)=F(G(0))=F(0)=0$ and $H'(0)=G'(0)F'(G(0))=G'(0)F'(0) >0$. Then we conclude that $H \in \Sigma_0$.
\vspace{0.5cm}
\end{proof}

\begin{coro} 
\label{coro:SetOm0}
From the lemma \eqref{lem:SetOm0} and the  point (\ref{prop1e}) of the proposition  
\ref{theo:SetOm0}
we deduce\\
$\forall F \in \Sigma_0, \, \forall a,b>0, \, \forall \alpha \in [0,1], \, H=e^{a F}-1, -a \ln (1-b F), a \ln \dfrac{1+ \alpha b F}{1-b F}$
belong to $\Sigma_0$.
\end{coro}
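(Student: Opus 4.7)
The proof is essentially a direct application of two ingredients already established in the excerpt, so the plan is to combine them cleanly rather than to introduce any new machinery.

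First, I would invoke Lemma \ref{lem:SetOm0} to record that the three prototype functions $\varphi_1(t) = e^{at}-1$, $\varphi_2(t) = -a\ln(1-bt)$, and $\varphi_3(t) = a\ln\frac{1+\alpha b t}{1-b t}$ belong to $\Sigma_0$ for all admissible values of $a,b,\alpha$. This step is exactly the content of the lemma, so nothing needs to be checked anew.

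Second, I would apply point (\ref{prop1e}) of Proposition \ref{theo:SetOm0}: the set $\Sigma_0$ is closed under composition. Taking $F \in \Sigma_0$ as given and composing on the left with each $\varphi_i \in \Sigma_0$, I obtain $\varphi_1 \circ F = e^{aF}-1$, $\varphi_2 \circ F = -a\ln(1-bF)$, and $\varphi_3 \circ F = a\ln\frac{1+\alpha b F}{1-b F}$, each of which lies in $\Sigma_0$. This yields precisely the three conclusions of the corollary.

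There is no real obstacle here; the only thing to be careful about is the order of composition (the parameters $a, b, \alpha$ belong to the outer function, while the inner function is the arbitrary $F \in \Sigma_0$) and the fact that the constants $a, b > 0$ and $\alpha \in [0,1]$ allowed in the lemma are exactly the ranges quoted in the corollary, so the hypothesis of the composition property is satisfied without further restriction on $F$.
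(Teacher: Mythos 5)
Your proposal is correct and follows exactly the route the paper intends: the corollary is stated as a direct consequence of Lemma \ref{lem:SetOm0} (which puts the three outer functions in $\Sigma_0$) combined with the closure of $\Sigma_0$ under composition from point (\ref{prop1e}) of Proposition \ref{theo:SetOm0}. Your remark about keeping the composition order straight (outer prototype, inner $F$) is the only point of care, and you handle it correctly.
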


\subsubsection{The generating functions set $\Sigma$ for the $x_n!$}
We now define the set $\Sigma$ as

\begin{defi}
$\Sigma$ is the set of entire series $\mathcal{N}(t)=\sum_{n=0}^\infty a_n t^n$ possessing a non-vanishing radius of convergence and verifying $a_0=1$ and $\forall n \ge1, \, a_n>0$. We associate to each of these function $\mathcal{N}$ the nonnegative sequence $\{ x_n \}_{n \in \mathbb{N}}$ defined as $x_0=0$ and $\forall n \ge 1$, 
\begin{equation}
\label{nnseq1}
x_n=\frac{a_{n-1}}{a_n}=n \, \frac{\mathcal{N}^{(n-1)}(0)}{\mathcal{N}^{(n)}(0)}>0 ,
\end{equation}
where $\mathcal{N}^{(n)}(0) = d^n / dt^n \mathcal{N}(t)\vert_{t=0} $.
\end{defi}
With this definition of the coefficients $x_n$ (the sequence $\{x_n\}$ being not necessarily increasing), we obtain $a_n=1/(x_n!)$ and then the definition of Eq.\eqref{eqn:XnGFct} is recovered.

\begin{prop}
\label{theo:SetOm}
 The set $\Sigma$ verifies the following properties:

\begin{enumerate}

 \item \label{prop2a}  $ \forall \mathcal{N}_1,  \mathcal{N}_2 \in \Sigma$, 
 $\mathcal{N}_1 + \mathcal{N}_2 -1  \in \Sigma$, 

\item  \label{prop2b} $\Sigma$ is a convex set,

\item  \label{prop2c} $\forall \mathcal{N}_1, \mathcal{N}_2 \in \Sigma, \, \mathcal{N}_1\mathcal{N}_2 \in \Sigma$,
 
\item  \label{prop2d} $\forall F \in \Sigma_0, \quad t \mapsto e^{F(t)} \in \Sigma$ .
\end{enumerate}
\end{prop}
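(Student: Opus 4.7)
The plan is to verify each of the four properties directly at the level of Taylor coefficients, remembering that membership in $\Sigma$ amounts to three conditions: constant term equal to $1$, every coefficient of $t^n$ for $n\ge 1$ strictly positive, and positive radius of convergence.

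For point (\ref{prop2a}), I would write $\mathcal{N}_j(t)=1+\sum_{n\ge 1}a^{(j)}_n t^n$ with $a^{(j)}_n>0$, so that $\mathcal{N}_1+\mathcal{N}_2-1$ has constant term $1$ and $n$-th coefficient $a^{(1)}_n+a^{(2)}_n>0$. Point (\ref{prop2b}) is essentially identical: for $\lambda\in[0,1]$, the convex combination $\lambda \mathcal{N}_1+(1-\lambda)\mathcal{N}_2$ has constant term $\lambda+(1-\lambda)=1$ and coefficients $\lambda a^{(1)}_n+(1-\lambda)a^{(2)}_n>0$, since at least one of the two weights is strictly positive and both $a^{(j)}_n$ are strictly positive. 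In both cases the radius of convergence is bounded below by $\min(R_1,R_2)>0$.

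For point (\ref{prop2c}), I would expand the Cauchy product $\mathcal{N}_1(t)\mathcal{N}_2(t)=\sum_{n\ge 0}c_n t^n$ with $c_n=\sum_{k=0}^n a^{(1)}_k a^{(2)}_{n-k}$ (convention $a^{(j)}_0=1$). Then $c_0=1$, and for $n\ge 1$ the single term $a^{(1)}_0 a^{(2)}_n=a^{(2)}_n>0$ already forces $c_n>0$, while the radius of convergence of the product is again at least $\min(R_1,R_2)$.

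The only genuinely new ingredient is in point (\ref{prop2d}). Writing $F(t)=\sum_{n\ge 1}b_n t^n\in\Sigma_0$ with $b_1>0$ and $b_n\ge 0$ for $n\ge 2$, I would expand $e^{F(t)}=\sum_{k\ge 0}F(t)^k/k!$. Since $F$ has nonnegative coefficients with $F(0)=0$, every power $F(t)^k$ is a power series in $t$ with nonnegative coefficients whose lowest-order nonzero term is $b_1^k t^k$. Summing over $k$, the coefficient of $t^n$ in $e^{F(t)}$ is therefore bounded below by $b_1^n/n!>0$, while the constant term is $e^{F(0)}=1$. The radius of convergence is inherited from that of $F$ via composition with the entire exponential (the composition is analytic on any disk where $F$ converges). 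The main (minor) obstacle is just the bookkeeping that $F(t)^k$ indeed starts with $b_1^k t^k$ and has only nonnegative coefficients; there is no deeper analytical subtlety.
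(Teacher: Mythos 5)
Your proof is correct, and for points (1)--(3) it simply fills in the coefficient-level details that the paper dismisses as obvious. For point (4), however, you take a genuinely different route. The paper works with derivatives: it writes $\mathcal{N}'=F'\mathcal{N}$ and proves by induction that $\mathcal{N}^{(n)}(t)=G_n(t)\mathcal{N}(t)$ for auxiliary series $G_n$ satisfying $G_{n+1}=G_n'+G_1G_n$, $G_n(0)>0$ and $G_n^{(k)}(0)\ge 0$, which yields $\mathcal{N}^{(n)}(0)=G_n(0)>0$. You instead expand $e^{F}=\sum_{k\ge 0}F^k/k!$ directly and observe that each $F^k$ has nonnegative coefficients with leading term $b_1^k t^k$, so the $n$-th coefficient of $e^F$ is bounded below by $b_1^n/n!>0$. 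Your argument is shorter and more quantitative (it gives an explicit positive lower bound on each coefficient, hence an immediate comparison with the undeformed exponential), and the rearrangement it relies on is harmless since only the terms $k\le n$ contribute to the coefficient of $t^n$. What the paper's heavier induction buys is the recursion $G_{n+1}=G_n'+G_1G_n$ itself: it is reused verbatim in the proof of the subsequent corollary to derive $x_n=n\,G_{n-1}(0)/G_n(0)$ and the bound $0\le x_n\le n x_1$, so if you adopted your proof you would need a separate argument (or a small extra computation with your coefficient bounds) to recover that estimate.
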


\begin{proof}
The properties (1), (2) and (3) are obvious from the definition of $\Sigma$. \\
To prove the part (4) of the theorem, let us define $\mathcal{N}(t)=e^{F(t)}$ with $F \in \Sigma_0$. Since $F(0)=0$, we have $\mathcal{N}(0)=1$, then it remains to prove that $\mathcal{N}^{(n)}(0) >0$ for all $n \ge 1$. \\
First $\mathcal{N}'(t)=F'(t) \mathcal{N}(t)$, then $\mathcal{N}'(0)=F'(0) >0$. Moreover for all $k \ge 0$, $F^{(k+1)}(0) \ge 0$. So let us prove by recurrence that for all $n \ge 1$, $\mathcal{N}^{(n)}(t)=G_n(t) \mathcal{N}(t)$ where $G_n$ is an entire series such that $G_n(0) >0$ and $\forall k \ge 1, \, G_n^{(k)}(0) \ge 0$. The property holds true for $n=1$ with $G_1=F'$. Let us assume that it holds true for $n$.
We have $\mathcal{N}^{(n+1)}(t)=(G'_n(t)+G_1(t) G_n(t)) \mathcal{N}(t)$ and so $G_{n+1}=G'_n+G_1G_n$. Now $G'_n(0)+G_1(0) G_n(0)=G_{n+1}(0) >0$ since $G'_n(0) \ge 0$ and $G_1(0), G_n(0) >0$. Finally from $G_{n+1}=G'_n+G_1G_n$ we have obviously $G^{(k)}_{n+1}(0) \ge 0$ since this property holds true for $G_1$ and $G_n$. We conclude that for all $n \ge 1$, $\mathcal{N}^{(n)}(0)=G_n(0) >0$ and then $\mathcal{N} \in \Sigma$.
\end{proof}

Note that the reciprocal of point (4) is not true.
%%%%%%%%%%%%%%%%%%%%%
\subsubsection{The generating functions set $\Sigma_+$ for sequences of Complete Statistical Type}
Assuming $\mathcal{N} \in \Sigma$, let us now define the polynomial $p_n(\eta)$ in terms of the generating function $\mathcal{N}(t)/\mathcal{N}(\eta t)$ from Eq.\eqref{eqn:PnGFct} (or using the residue theorem). 

Since each $\mathcal{N} \in \Sigma$ is an entire series verifying $\mathcal{N}(0)=1$ (with a non-vanishing radius of convergence), the function $t \mapsto \mathcal{N}(\eta t)^{-1}$ is also analytical around $t=0$ (for any $\eta$). As a consequence, the function $G_{\mathcal{N},\eta}$ defined as $G_{\mathcal{N},\eta}(t)=\mathcal{N}(t)/\mathcal{N}(\eta t)$ is itself analytical around $t=0$ (for any $\eta$) with $G_{\mathcal{N},\eta}(0)=1$. Therefore the (strict) positiveness of the polynomials $p_n(\eta)$ is in fact equivalent to the assumption $G_{\mathcal{N},\eta} \in \Sigma$. 

As we already know that in fact $p_n(1)=0$ for $n \ne 0$ and $p_0(1)=1$ (using for example $G_{\mathcal{N},\eta}$ for $\eta=1$), we can define the subset $\Sigma_+$ of $\Sigma$ associated with generating functions of CST sequences as

\begin{defi} Using the notation $ G_{\mathcal{N},\eta}(t)=\mathcal{N}(t)/\mathcal{N}(\eta t)$, $\Sigma_+$ is defined as
\begin{equation}
\label{eqn:DefSigma}
\Sigma_+=\left\{ \mathcal{N} \in \Sigma \, | \,  \forall \eta \in [0,1[, \, G_{\mathcal{N},\eta} \in \Sigma \, \right\}
\end{equation}
\end{defi}
\noindent This subset $\Sigma_+$ contains 
deformed binomial distributions with a consistent and complete probabilistic interpretation. As a matter of fact, we have the following properties:

\begin{prop}
\label{theo:SetOmP}
The set $\Sigma_+$ defined in Eq.\eqref{eqn:DefSigma} satisfies the following properties: 
\begin{enumerate}
\item $\forall \mathcal{N}_1, \mathcal{N}_2 \in \Sigma_+, \, \mathcal{N}_1\mathcal{N}_2 \in \Sigma_+$ ,
\item $\forall F \in \Sigma_0, \quad t \mapsto e^{F(t)} \in \Sigma_+$ .
\end{enumerate}
\end{prop}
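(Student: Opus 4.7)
My plan is to reduce both claims to the previously established closure properties of $\Sigma$ and $\Sigma_0$, using the fact that $G_{\mathcal{N},\eta}$ behaves multiplicatively in $\mathcal{N}$.

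For part (1), I would first observe that $\mathcal{N}_1\mathcal{N}_2 \in \Sigma$ by Proposition \ref{theo:SetOm} item (\ref{prop2c}). The real content is to check that $G_{\mathcal{N}_1\mathcal{N}_2,\eta} \in \Sigma$ for every $\eta \in [0,1[$, so that the defining condition of $\Sigma_+$ is met. The key identity is
\begin{equation}
G_{\mathcal{N}_1\mathcal{N}_2,\eta}(t) = \frac{\mathcal{N}_1(t)\mathcal{N}_2(t)}{\mathcal{N}_1(\eta t)\mathcal{N}_2(\eta t)} = G_{\mathcal{N}_1,\eta}(t)\, G_{\mathcal{N}_2,\eta}(t),
\end{equation}
which follows directly from the definition of $G_{\mathcal{N},\eta}$. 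By hypothesis each factor lies in $\Sigma$, and then invoking Proposition \ref{theo:SetOm} item (\ref{prop2c}) once more gives $G_{\mathcal{N}_1\mathcal{N}_2,\eta} \in \Sigma$, as required.

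For part (2), I would first apply Proposition \ref{theo:SetOm} item (\ref{prop2d}) to conclude that $\mathcal{N}(t) = e^{F(t)} \in \Sigma$. The remaining task is to show $G_{\mathcal{N},\eta} \in \Sigma$ for every $\eta \in [0,1[$. Here the relevant rewriting is
\begin{equation}
G_{\mathcal{N},\eta}(t) = \frac{e^{F(t)}}{e^{F(\eta t)}} = \exp\bigl(F(t)-F(\eta t)\bigr).
\end{equation}
By Proposition \ref{theo:SetOm0} item (\ref{prop1c}) the function $t \mapsto F(t)-F(\eta t)$ belongs to $\Sigma_0$ (the case $\eta \in [0,1[$ is covered). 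Then Proposition \ref{theo:SetOm} item (\ref{prop2d}) applied to this function yields $G_{\mathcal{N},\eta} \in \Sigma$, completing the argument.

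Both steps are essentially bookkeeping once the multiplicative factorization of $G_{\mathcal{N},\eta}$ under products of $\mathcal{N}$'s, and the additive factorization under composition with $\exp$, are noticed. I do not anticipate any genuine obstacle: the nontrivial work was already done in Propositions \ref{theo:SetOm0} and \ref{theo:SetOm}, and the present proposition simply combines those closure properties with the defining relation of $\Sigma_+$. The only small point to be careful about is that the condition defining $\Sigma_+$ requires $\eta \in [0,1[$ rather than $\eta = 1$, which is precisely the range allowed in Proposition \ref{theo:SetOm0} item (\ref{prop1c}).
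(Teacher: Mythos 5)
Your proposal is correct and follows essentially the same route as the paper: part (1) via the factorization $G_{\mathcal{N}_1\mathcal{N}_2,\eta}=G_{\mathcal{N}_1,\eta}G_{\mathcal{N}_2,\eta}$ together with closure of $\Sigma$ under products, and part (2) via $G_{\mathcal{N},\eta}=\exp(F(t)-F(\eta t))$ together with Proposition \ref{theo:SetOm0} (point (\ref{prop1c})) and Proposition \ref{theo:SetOm} (point (4)). No gaps.
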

\begin{proof}
 To prove point (1), 
from the proposition \ref{theo:SetOm} we have 
that $\mathcal{N}_1$ and $\mathcal{N}_2 \in \Sigma$ implies that $\mathcal{N}_1\mathcal{N}_2 \in \Sigma$. Furthermore $G_{\mathcal{N}_1\mathcal{N}_2, \eta}=G_{\mathcal{N}_1, \eta} G_{\mathcal{N}_2, \eta}$, then if $\mathcal{N}_1$ and $\mathcal{N}_2 \in \Sigma_+$, by definition $G_{\mathcal{N}_1, \eta}$ and $G_{\mathcal{N}_2, \eta} \in \Sigma$. Then, from the proposition  \ref{theo:SetOm}, we deduce $G_{\mathcal{N}_1\mathcal{N}_2, \eta} \in \Sigma$.\\
The part (2) is obtained from the proposition \eqref{theo:SetOm0} (point (\ref{prop1c})) and \eqref{theo:SetOm} (point (4)). First if $F \in \Sigma_0$, then $\mathcal{N}=e^F \in \Sigma$ (proposition \eqref{theo:SetOm}). Furthermore $G_{\mathcal{N}, \eta}(t)=e^{F(t)-F(\eta t)}$. For $\eta \in [0,1[$, we know from proposition \eqref{theo:SetOm0} that $t \mapsto F(t)-F(\eta t) \in \Sigma_0$, then we deduce 
from proposition \eqref{theo:SetOm}, $G_{\mathcal{N}, \eta} \in \Sigma$.
\end{proof}

%%%%%%%%%%%%%%%%%%%
\subsection{The characterization of $\Sigma_+$}
In fact, point (2) in Proposition \ref{theo:SetOmP}  is not only a sufficient but also a necessary condition to obtain functions of $\Sigma_+$. In order to prove it, we start from the following lemma: 

\begin{lem}
\label{lem:NpN}
For any $\mathcal{N} \in \Sigma_+$, the polynomials $p_n$ verify $p'_n(1) \le 0$ and furthermore
\begin{equation}
\label{eqn:NpN}
\dfrac{\mathcal{N}'(t)}{\mathcal{N}(t)}=- \sum_{n=0}^\infty \dfrac{p'_{n+1}(1)}{x_{n+1}!} t^n .
\end{equation}
\end{lem}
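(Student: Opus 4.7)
The plan is to derive both statements from differentiating the generating function identity $G_{\mathcal{N},\eta}(t)=\mathcal{N}(t)/\mathcal{N}(\eta t)=\sum_{k=0}^\infty \tfrac{t^k}{x_k!}\,p_k(\eta)$ with respect to $\eta$ and then evaluating at $\eta=1$.

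First I would record that $G_{\mathcal{N},1}(t)=1$, which forces $p_0(1)=1$ and $p_n(1)=0$ for $n\ge 1$. Differentiating the left-hand expression of $G_{\mathcal{N},\eta}(t)$ in $\eta$ gives
\begin{equation}
\partial_\eta G_{\mathcal{N},\eta}(t)=-\,t\,\frac{\mathcal{N}(t)\,\mathcal{N}'(\eta t)}{\mathcal{N}(\eta t)^2},
\end{equation}
which at $\eta=1$ collapses to $-t\,\mathcal{N}'(t)/\mathcal{N}(t)$. Differentiating the right-hand series in $\eta$ termwise (legitimate inside the common disk of convergence since $\mathcal{N}\in\Sigma$ is entire-analytic near $0$ and $\mathcal{N}(\eta t)\ne 0$ for small $t$) and evaluating at $\eta=1$ yields $\sum_{k\ge 0}\tfrac{t^k}{x_k!}\,p'_k(1)$. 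Using $p'_0(1)=0$, equating the two expressions and dividing through by $t$ gives exactly \eqref{eqn:NpN}.

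For the inequality $p'_n(1)\le 0$, I would use the defining property of $\Sigma_+$: the assumption $\mathcal{N}\in\Sigma_+$ means $G_{\mathcal{N},\eta}\in\Sigma$ for every $\eta\in[0,1[$, hence $p_n(\eta)>0$ throughout $[0,1[$ for all $n\ge 1$. Since also $p_n(1)=0$, the difference quotient
\begin{equation}
\frac{p_n(\eta)-p_n(1)}{\eta-1}=\frac{p_n(\eta)}{\eta-1}
\end{equation}
is strictly negative for $\eta\in[0,1[$ (positive numerator, negative denominator). Letting $\eta\to 1^-$, the polynomial being smooth, we obtain $p'_n(1)\le 0$, which completes the proof.

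There is no real obstacle here: the only technical points to be careful with are justifying term-by-term differentiation of the generating series in $\eta$ (handled by analyticity of $\mathcal{N}$ and nonvanishing of $\mathcal{N}(\eta t)$ in a neighbourhood of the origin, uniform in $\eta$ on compacta of $[0,1]$) and the fact that we end up only with a weak inequality $p'_n(1)\le 0$ rather than a strict one, since the limit of strict inequalities need not be strict.
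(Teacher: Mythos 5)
Your proof is correct and follows essentially the same route as the paper: differentiate $G_{\mathcal{N},\eta}(t)=\mathcal{N}(t)/\mathcal{N}(\eta t)$ with respect to $\eta$, evaluate at $\eta=1$, and use $p'_0=0$ to obtain \eqref{eqn:NpN}. Your difference-quotient argument for $p'_n(1)\le 0$ is just the direct form of the paper's contrapositive ("if $p'_n(1)>0$ then $p_n<0$ near $1$"), so there is no substantive difference.
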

\begin{proof}
 To begin with, 
$p_0(\eta)=1$, so $p'_0(\eta)=0$. For $n \ge 1$, the polynomials $p_n(\eta)$ are nonnegative on the interval $\eta \in [0,1[$ and $p_n(1)=0$. Then $p'_n(1)$ cannot be strictly positive, otherwise $p_n(\eta)$ would be negative on some interval $]1-\epsilon,1[$. We conclude that $\forall n \in \mathbb{N}, \, p'_n(1) \le 0$.\\
Now using the function $G_{\mathcal{N},\eta}(t)=\mathcal{N}(t)/\mathcal{N}(\eta t)$ and by a differentiation with respect to $\eta$, we obtain first
\begin{equation}
t \, \dfrac{\mathcal{N}(t) \mathcal{N}'(\eta t)}{\mathcal{N}(\eta t)^2}= - \sum_{n=0}^\infty \dfrac{p'_n(\eta)}{x_n!} t^n .
\end{equation}
Taking into account the special case $p'_0(\eta)=0$ and choosing $\eta=1$ we obtain the equation \eqref{eqn:NpN}.
\end{proof}

\vspace{0.5cm}

This lemma leads to the theorem

\begin{theo}
\label{theo:LnN}
For all $\mathcal{N} \in \Sigma_+$, $\ln \mathcal{N} \in \Sigma_0$ and
\begin{equation}
\label{eqn:LnN}
\ln \mathcal{N}(t)= - \sum_{n=1}^\infty \dfrac{p'_n(1)}{x_n!} \dfrac{t^n}{n}.
\end{equation}
Furthermore we deduce the important characterization
\begin{equation}
\label{characterization1}
\Sigma_+=\{ e^F \, | \, F \in \Sigma_0\}  .
\end{equation}
\end{theo}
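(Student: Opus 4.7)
The plan is to derive the stated formula for $\ln \mathcal{N}(t)$ by integrating the identity of Lemma \ref{lem:NpN} termwise, then verify that the resulting series satisfies the three conditions defining $\Sigma_0$, and finally combine this with point (2) of Proposition \ref{theo:SetOmP} to obtain the characterization \eqref{characterization1}.

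First I would integrate the formula of Lemma \ref{lem:NpN} from $0$ to $t$. Since $\mathcal{N}(0)=1$, the antiderivative of $\mathcal{N}'(t)/\mathcal{N}(t)$ with value $0$ at $0$ is exactly $\ln \mathcal{N}(t)$. Termwise integration is legitimate inside the common disk of convergence, and a shift of summation index yields the claimed expression
\begin{equation*}
\ln \mathcal{N}(t) = -\sum_{n=0}^\infty \frac{p'_{n+1}(1)}{x_{n+1}!}\,\frac{t^{n+1}}{n+1} = -\sum_{n=1}^\infty \frac{p'_n(1)}{x_n!}\,\frac{t^n}{n}.
\end{equation*}

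Next I would check that $\ln\mathcal{N}\in\Sigma_0$. Writing $\ln\mathcal{N}(t)=\sum_{n\ge 0} a_n t^n$ we have $a_0=0$ from the integration; for $n\ge 2$, the coefficient $a_n = -p'_n(1)/(n\, x_n!)$ is nonnegative because Lemma \ref{lem:NpN} gives $p'_n(1)\le 0$; and for $n=1$ the value $p_1(\eta)=1-\eta$, obtained from the normalization \eqref{somapgotico}, yields $p'_1(1)=-1$, hence $a_1=1/x_1!>0$. The non-vanishing radius of convergence is automatic since $\mathcal{N}$ is analytic at $0$ with $\mathcal{N}(0)=1\ne 0$, so $\ln\mathcal{N}$ is itself analytic in a neighborhood of $0$. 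Therefore all three defining conditions of $\Sigma_0$ hold and $\ln\mathcal{N}\in\Sigma_0$.

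Finally I would prove the characterization $\Sigma_+=\{e^F : F\in\Sigma_0\}$ by double inclusion. The inclusion $\{e^F : F\in\Sigma_0\}\subset\Sigma_+$ is exactly point (2) of Proposition \ref{theo:SetOmP}. For the reverse inclusion, given any $\mathcal{N}\in\Sigma_+$, set $F=\ln\mathcal{N}$; by the preceding step $F\in\Sigma_0$, and obviously $\mathcal{N}=e^F$, so $\mathcal{N}\in\{e^F:F\in\Sigma_0\}$. The only mild subtlety I expect is the strict positivity $a_1>0$, which relies on explicitly knowing $p_1(\eta)=1-\eta$ rather than merely the inequality $p'_1(1)\le 0$; otherwise the argument is a direct assembly of the already-established facts.
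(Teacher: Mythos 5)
Your proposal is correct and follows essentially the same route as the paper: termwise integration of Lemma \ref{lem:NpN} with $\ln\mathcal{N}(0)=0$, the signs $p'_n(1)\le 0$ together with $p'_1(1)=-1$ to place $\ln\mathcal{N}$ in $\Sigma_0$, and point (2) of Proposition \ref{theo:SetOmP} for the converse inclusion. The paper's own proof is just a terser version of the same argument, so nothing further is needed.
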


\begin{proof}
Eq.\eqref{eqn:LnN} is immediately obtained by integrating Eq. \eqref{eqn:NpN} term by term and  taking into account the value $\ln \mathcal{N}(0)=0$. 
Using the property $p'_n(1) \le 0$, which was shown in lemma \ref{lem:NpN}, and the fact that $p'_1(1)= -1$, as $p_1(\eta)=1-\eta$, we deduce that $\ln \mathcal{N} \in \Sigma_0$.\\ 
The last part of the theorem results from the proposition \eqref{theo:SetOmP}  (point (2)) and the previous comment.
\end{proof}

\vspace{0.5cm}

Theorem \ref{theo:LnN} provides a fair estimate on the manner in which a CST sequence behaves in comparison with natural integers. More precisely, we have: 

\begin{coro}
Let us pick a  $\mathcal{N} \in \Sigma_+$; its associated CST sequence $x_n = n \, \mathcal{N}^{(n-1)}(0)/\mathcal{N}^{(n)}(0)$ then verifies 
\begin{equation}
\label{xnmax}
\forall n \in \mathbb{N}, \, 0 \le x_n \le n x_1 .
\end{equation}
\end{coro}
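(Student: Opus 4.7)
The plan is to exploit the characterization $\Sigma_+ = \{ e^F \, | \, F \in \Sigma_0\}$ established in Theorem \ref{theo:LnN}, after which the inequality follows from a single-term estimate on a Cauchy product. Write $\mathcal{N}(t) = e^{F(t)}$ with $F(t) = \sum_{k \ge 1} b_k t^k \in \Sigma_0$, so that $b_1 > 0$ and $b_k \ge 0$ for $k \ge 2$. Evaluating $F'(0) = \mathcal{N}'(0)/\mathcal{N}(0)$ with $\mathcal{N}(0) = 1$ and $\mathcal{N}'(0) = 1/x_1$ pins down $b_1 = 1/x_1$, which is the quantity the bound is ultimately comparing against.

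Then I would differentiate $\mathcal{N} = e^F$ to get $\mathcal{N}' = F'\mathcal{N}$ and equate coefficients of $t^{n-1}$ on both sides. Setting $a_n = 1/x_n!$, this reads, for every $n \ge 1$,
\begin{equation*}
n\, a_n = \sum_{k=1}^n k\, b_k\, a_{n-k} .
\end{equation*}
Every term on the right is nonnegative, so retaining only the $k = 1$ contribution gives $n a_n \ge b_1 a_{n-1}$, that is,
\begin{equation*}
\dfrac{n}{x_n!} \ge \dfrac{1}{x_1\, x_{n-1}!},
\end{equation*}
which rearranges to $x_n \le n x_1$. The lower bound $x_n \ge 0$ and the case $n = 0$ are immediate from the definition of $\Sigma$, where $x_0 = 0$ and $x_n > 0$ for $n \ge 1$.

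There is no genuine obstacle here: Theorem \ref{theo:LnN} has already done the real work by converting the nonnegativeness hypothesis on the polynomials $p_n(\eta)$ into the transparent statement $\ln \mathcal{N} \in \Sigma_0$, after which the bound is a one-line coefficient comparison. As a side remark, equality is attained precisely when $b_k = 0$ for all $k \ge 2$, i.e.\ when $F(t) = t/x_1$ and $\mathcal{N}(t) = e^{t/x_1}$, recovering the ordinary Poisson case with $x_n = n x_1$, which shows the estimate is sharp.
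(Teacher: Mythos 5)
Your proof is correct. It rests, like the paper's, on the characterization $\Sigma_+=\{e^F \mid F\in\Sigma_0\}$ from Theorem \ref{theo:LnN}, and both arguments ultimately come down to the same inequality $\mathcal{N}^{(n)}(0)\ge F'(0)\,\mathcal{N}^{(n-1)}(0)$; but the route you take to it is different and somewhat cleaner. The paper re-enters the proof of Proposition \ref{theo:SetOm}, reusing the inductively constructed functions $G_n$ with $\mathcal{N}^{(n)}=G_n\mathcal{N}$ and the recursion $G_{n+1}=G'_n+G_1G_n$, and then drops the nonnegative term $G'_{n-1}(0)$ from the denominator of $x_n=nG_{n-1}(0)/G_n(0)$. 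You instead differentiate $\mathcal{N}=e^F$ once, expand the single Cauchy product in $\mathcal{N}'=F'\mathcal{N}$, and discard all but the $k=1$ term of $n a_n=\sum_{k=1}^n k\,b_k\,a_{n-k}$. What your version buys is self-containment (no appeal to the internal structure of another proof) and a transparent equality analysis: since all $a_j>0$, equality for all $n$ forces $b_k=0$ for $k\ge 2$, i.e.\ $\mathcal{N}(t)=e^{t/x_1}$, so the bound is sharp exactly in the undeformed Poisson case. What the paper's version buys is economy within the article, since the $G_n$ apparatus is already in place. One small point of bookkeeping: your identity $b_1=F'(0)=\mathcal{N}'(0)$ gives $b_1=a_1=1/x_1$ only because $\mathcal{N}(0)=1$ normalizes $a_0=1$; you use this correctly, but it is worth stating explicitly that this is where the convention $x_0!=1$ enters.
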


\begin{proof}
If $\mathcal{N} \in \Sigma_+$, we can find some $F \in \Sigma_0$ such that $\mathcal{N}=e^F$ and we infer from the proof of Proposition \ref{theo:SetOm}  that for all $n \ge 1$, $\mathcal{N}^{(n)}(t)=G_n(t) \mathcal{N}(t)$, where $G_n$ is an entire series such that $G_n(0) >0$ and for $\forall k \ge 1, \, G_n^{(k)}(0) \ge 0$ and $G_{n+1}=G'_n+G_1G_n$. Using Eq. (\ref{nnseq1} this leads to 
\begin{equation}
\forall n \ge 1, \, x_n=n \dfrac{G_{n-1}(0)}{G_{n}(0)}=n \dfrac{G_{n-1}(0)}{G'_{n-1}(0)+G_1(0)G_{n-1}(0)} .
\end{equation}
As $G'_{n-1}(0) \ge 0$, $G_1(0)=1/x_1$, $G_{n-1}(0) >0$ and $x_0=0$, we get Eq.(\ref{xnmax}).
\end{proof}

In the case of odd functions $F\in \Sigma_0$ we obtain more precise results. 

\begin{coro}
Let us assume $\Sigma_+ \ni \mathcal{N}=e^F$ associated to the CST sequence $x_n=n \, \mathcal{N}^{(n-1)}(0)/\mathcal{N}^{(n)}(0)$, with $F \in \Sigma_0$, $F$ being an odd function. Then the (nonnegative) polynomials $p_n$ verify
\begin{equation}
p_n(\eta)=\sum_{k=0}^n \left(\begin{array}{c}x_n \\x_k \end{array}\right) (-\eta)^k
\end{equation}
\end{coro}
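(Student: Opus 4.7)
The plan is to reduce the claimed identity to showing that, under the hypothesis that $F$ is odd, all coefficients $I_k$ in the expansion \eqref{eqn:InGFct} equal $1$. Once that is done, substitution into the general formula \eqref{eqn:PnExpr} yields exactly the desired expression for $p_n(\eta)$, so the whole argument boils down to computing $1/\mathcal{N}(t)$ explicitly in the odd case.

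First I would write down the two relevant expansions provided earlier: the formula \eqref{eqn:PnExpr}, namely $p_n(\eta)=\sum_{k=0}^n \binom{x_n}{x_k} I_k (-\eta)^k$, and the defining relation \eqref{eqn:InGFct} for the $I_n$, i.e.\ $1/\mathcal{N}(t)=\sum_{n=0}^\infty (I_n/x_n!)(-t)^n$. So the task becomes: identify $I_n$ for all $n$ when $\mathcal{N}=e^F$ with $F$ odd.

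The key observation is the symmetry of $\mathcal{N}$: since $F$ is odd, $F(-t)=-F(t)$, whence
\begin{equation}
\mathcal{N}(-t)=e^{F(-t)}=e^{-F(t)}=\frac{1}{\mathcal{N}(t)}.
\end{equation}
On the other hand, by the very definition of the sequence $x_n$ via $\mathcal{N}(t)=\sum_{n\ge 0} t^n/x_n!$, one has directly $\mathcal{N}(-t)=\sum_{n\ge 0}(-t)^n/x_n!$. Comparing this expansion with \eqref{eqn:InGFct} and matching coefficients of $(-t)^n/x_n!$ gives $I_n=1$ for every $n\ge 0$.

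Plugging $I_k=1$ into \eqref{eqn:PnExpr} produces exactly the claimed identity $p_n(\eta)=\sum_{k=0}^n \binom{x_n}{x_k}(-\eta)^k$. Since the nonnegativity of these polynomials is already guaranteed by the hypothesis $\mathcal{N}\in\Sigma_+$ (which follows from Theorem \ref{theo:LnN} applied to $F\in\Sigma_0$), there is nothing further to check. I do not anticipate a genuine obstacle here: the only slightly delicate point is to verify that the series manipulations are legitimate, but this is immediate since $\mathcal{N}$ is entire (or has positive radius of convergence) and $\mathcal{N}(0)=1$ ensures $1/\mathcal{N}$ is analytic near $0$, so matching Taylor coefficients is justified.
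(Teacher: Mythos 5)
Your proposal is correct and follows essentially the same route as the paper: both exploit the oddness of $F$ to identify $\mathcal{N}(t)^{-1}=e^{-F(t)}=e^{F(-t)}=\mathcal{N}(-t)$, deduce $I_k=1$ by matching coefficients against Eq.~\eqref{eqn:InGFct}, and substitute into Eq.~\eqref{eqn:PnExpr}. Your write-up merely spells out the coefficient comparison a bit more explicitly than the paper does.
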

\begin{proof}
As $F$ is odd, $\mathcal{N}(t)^{-1}=e^{-F(t)}=e^{F(-t)}$, and so the coefficients $I_k$ previously defined in Eq.\eqref{eqn:InGFct} reduce to $I_k=1$. In that case, from Eq.\eqref{eqn:PnExpr} we immediately obtain the expression above for $p_n$. 
\end{proof}

%%%%%%%%%%%%%%%%%%%%%%%%%%%%%
\subsection{Internal deformations acting on $\Sigma_+$}
In this section we show how natural deformations can be defined using the tools previously presented in subsections 2.2 and 2.3.\\ 
We start with the definition of what we name ``normalized generating functions'' or ``normalized CST sequences''.

\begin{defi}
Let us pick a $\mathcal{N} \in \Sigma_+$. We say that $\mathcal{N}$ is normalized if and only if $\mathcal{N}'(0)=1$; equivalently the associated CST sequence $x_n=n\, \mathcal{N}^{(n-1)}(0)/\mathcal{N}^{(n)}(0)$ is normalized if and only if $x_1=1$. \\
Therefore, to any $\mathcal{N} \in \Sigma_+$ we can associate a normalized function $\mathcal{N}_{\rm norm} (t)=\mathcal{N}(t/\mathcal{N}'(0))$, $\mathcal{N}_{\rm norm} \in \Sigma_+$.\ 

We define the subset $\Sigma_+^{({\rm norm})}$ of $\Sigma_+$ as the set of normalized generating functions.
\end{defi}

\subsubsection{The  $\mathfrak{D}_\alpha$ deformation}
\noindent 
We define the deformation operator $\mathfrak{D}_\alpha$ for $\alpha \in [-1,1[$ acting on $\Sigma_+^{({\rm norm})}$ as

\begin{defi} 
$\forall \alpha \in [-1,1), \, \forall \mathcal{N} \in \Sigma_+^{({\rm norm})}$, 
\begin{equation*}
\mathfrak{D}_\alpha(\mathcal{N})(t)=
\frac{\mathcal{N}(t/(1-\alpha))}{\mathcal{N}(\alpha t/(1-\alpha))}\, .
\end{equation*}
Note that  $\mathfrak{D}_0$ is  the identity .
\end{defi}

\begin{prop} 
For all $\alpha \in [-1,1[$, $\mathfrak{D}_\alpha$ maps $\Sigma_+^{({\rm norm})}$ into $\Sigma_+^{({\rm norm})}$ and all operators $\mathfrak{D}_\alpha$ possess a common fixed point which is the generating function associated with the usual 
Bernoulli trial,  
namely $\mathcal{N}(t)=e^t$.
\end{prop}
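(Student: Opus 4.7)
The plan is to reduce the two claims to the characterization $\Sigma_+=\{e^F\mid F\in\Sigma_0\}$ proved in Theorem \ref{theo:LnN}, combined with the closure properties collected in Propositions \ref{theo:SetOm0} and \ref{theo:SetOm}. Given $\mathcal{N}\in\Sigma_+^{(\mathrm{norm})}$, I would first write $\mathcal{N}=e^{F}$ with $F\in\Sigma_0$, and observe that by the normalization $\mathcal{N}'(0)=1$ we have $F'(0)=1$. Then the definition of $\mathfrak{D}_\alpha$ gives directly
\begin{equation*}
\mathfrak{D}_\alpha(\mathcal{N})(t)=\exp\!\bigl(\tilde F(t)-\tilde F(\alpha t)\bigr),\qquad \tilde F(t)\eqdef F\!\left(\frac{t}{1-\alpha}\right).
\end{equation*}

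For the first claim, the idea is that $\tilde F$ is still in $\Sigma_0$, because multiplying the variable by the positive constant $1/(1-\alpha)$ (which is positive since $\alpha<1$) preserves the sign conditions $a_0=0$, $a_1>0$, $a_n\ge 0$ on the Taylor coefficients. Then by point (\ref{prop1c}) of Proposition \ref{theo:SetOm0}, applied with the parameter $\alpha\in[-1,1[$, the map $t\mapsto\tilde F(t)-\tilde F(\alpha t)$ belongs to $\Sigma_0$. Consequently, by point (4) of Proposition \ref{theo:SetOm} (equivalently, by the characterization \eqref{characterization1}), $\mathfrak{D}_\alpha(\mathcal{N})=e^{\tilde F(t)-\tilde F(\alpha t)}\in\Sigma_+$. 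To check that it is still normalized, I would compute
\begin{equation*}
\bigl[\tilde F(t)-\tilde F(\alpha t)\bigr]'\big|_{t=0}=\frac{F'(0)}{1-\alpha}-\frac{\alpha F'(0)}{1-\alpha}=F'(0)=1,
\end{equation*}
so $\mathfrak{D}_\alpha(\mathcal{N})'(0)=1$, as required.

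For the second claim, the fixed-point verification is a one-line exponential computation:
\begin{equation*}
\mathfrak{D}_\alpha(e^{\,\cdot\,})(t)=\frac{\exp\!\bigl(t/(1-\alpha)\bigr)}{\exp\!\bigl(\alpha t/(1-\alpha)\bigr)}=\exp\!\left(\frac{(1-\alpha)\,t}{1-\alpha}\right)=e^{t},
\end{equation*}
valid for every $\alpha\in[-1,1[$.

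There is no substantial obstacle here: the entire proof is an application of the characterization theorem together with the elementary algebraic identity for the exponential. The only mild subtleties to watch are that $1-\alpha>0$ throughout the allowed range (so the rescaling is well defined and positive), and that $\alpha$ is exactly in the interval $[-1,1[$ where point (\ref{prop1c}) of Proposition \ref{theo:SetOm0} applies — matching the permitted range of the deformation parameter.
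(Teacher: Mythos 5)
Your proof is correct and follows essentially the same route as the paper's: both write $\mathcal{N}=e^{F}$ with $F\in\Sigma_0$ and $F'(0)=1$ via Theorem \ref{theo:LnN}, invoke point (\ref{prop1c}) of Proposition \ref{theo:SetOm0} to get $t\mapsto F\left(\frac{t}{1-\alpha}\right)-F\left(\frac{\alpha t}{1-\alpha}\right)\in\Sigma_0$, check normalization by differentiating at $0$, and verify the fixed point $e^{t}$ by direct computation. The only cosmetic difference is that you make explicit the intermediate rescaling step $\tilde F(t)=F(t/(1-\alpha))\in\Sigma_0$, which the paper leaves implicit.
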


\begin{proof}
From the theorem \eqref{theo:LnN}, for any $\mathcal{N} \in \Sigma_+^{({\rm norm})}$, we can find some $F \in \Sigma_0$ such that $\mathcal{N}=e^F$ with $F'(0)=1$ since $\mathcal{N}$ is normalized. Then 
\begin{equation}
\label{dalpha1}
\mathfrak{D}_\alpha (\mathcal{N})(t)=\exp \left( F  \left( \frac{t}{1-\alpha} \right) - F  \left( \frac{\alpha t}{1-\alpha} \right) \right) ,
\end{equation}
From the proposition \eqref{theo:SetOm0} (part (\ref{prop1c})), the function $t \mapsto F  \left( \frac{t}{1-\alpha} \right) - F  \left( \frac{\alpha t}{1-\alpha} \right)$ belongs to $\Sigma_0$ for $\alpha \in [-1,1[$, then from the theorem \eqref{theo:LnN}, $\mathfrak{D}_\alpha (\mathcal{N}) \in \Sigma_+$. Furthermore
\begin{equation}
\mathfrak{D}_\alpha (\mathcal{N})'(0)=\dfrac{1}{1-\alpha} \dfrac{\mathcal{N}'(0)\mathcal{N}(0)-\alpha \mathcal{N}'(0)\mathcal{N}(0)}{\mathcal{N}(0)^2}=1.
\end{equation}
Then $\mathfrak{D}_\alpha (\mathcal{N})$ is normalized and $\mathfrak{D}_\alpha (\mathcal{N}) \in \Sigma_+^{({\rm norm})}$.\\
Verifying that $\mathfrak{D}_\alpha (\mathcal{N})=\mathcal{N}$ for $\mathcal{N}(t)=e^t$ is straightforward.
\end{proof}
\vspace{0.5cm}

The deformation $\mathfrak{D}_\alpha$ allows us to build new explicit CST sequences according to the following proposition:

\begin{prop}
Let us pick some $\mathcal{N} \in  \Sigma_+^{({\rm norm})}$ associated with the normalized CST sequence $x_n=n\, \mathcal{N}^{(n-1)}(0)/\mathcal{N}^{(n)}(0)$ and let $p_n$ be 
the associated polynomials defined by the Eq.\eqref{eqn:PnGFct}. To the new function 
$\mathcal{N}_\alpha=\mathfrak{D}_{\alpha} (\mathcal{N})$ (for $\alpha \in [-1,1[\,$) it corresponds the CST sequence $x_n^{(\alpha)}$ defined as
\begin{equation}
\forall n \ge 1, \, x_n^{(\alpha)} = x_n \dfrac{(1-\alpha)p_{n-1}(\alpha)}{ p_n(\alpha)},
\end{equation}
and the corresponding coefficients $I_n^{(\alpha)}$ of Eq.\eqref{eqn:InGFct} are given by
\begin{equation}
I_n^{(\alpha)}=(-\alpha)^n \dfrac{p_n(1/\alpha)}{p_n(\alpha)}=\dfrac{\sum_{k=0}^n \left(\begin{array}{c}x_n \\x_k \end{array}\right) I_k (-\alpha)^{n-k}}{\sum_{k=0}^n \left(\begin{array}{c}x_n \\x_k \end{array}\right) I_k (-\alpha)^k}.
\end{equation}
The polynomials $p_n^{(\alpha)}(\eta)$ of Eq.\eqref{eqn:PnExpr} become
\begin{equation}
p_n^{(\alpha)}(\eta)=\dfrac{1}{p_n(\alpha)} \sum_{k=0}^n \left(\begin{array}{c}x_n \\x_k \end{array}\right) p_k(1/\alpha) p_{n-k}(\alpha) (\alpha \eta)^k .
\end{equation}
\end{prop}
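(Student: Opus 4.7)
The plan is to read off everything from the definition $\mathcal{N}_\alpha(t)=\mathcal{N}(t/(1-\alpha))/\mathcal{N}(\alpha t/(1-\alpha))$ by recognizing it as an evaluation of the generating function $G_{\mathcal{N},\eta}(u)=\mathcal{N}(u)/\mathcal{N}(\eta u)$ that is already known to expand as $\sum_{k\ge 0}(u^k/x_k!)\,p_k(\eta)$.

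First I would compute $x_n^{(\alpha)}$. Setting $u=t/(1-\alpha)$ and $\eta=\alpha$ in the generating function for the $p_k$, I would obtain
\[
\mathcal{N}_\alpha(t)=\sum_{n=0}^{\infty}\frac{p_n(\alpha)}{(1-\alpha)^n\,x_n!}\,t^n,
\]
so the Taylor coefficients of $\mathcal{N}_\alpha$ are $a_n^{(\alpha)}=p_n(\alpha)/((1-\alpha)^n x_n!)$. The ratio $x_n^{(\alpha)}=a_{n-1}^{(\alpha)}/a_n^{(\alpha)}$ then gives the stated formula at once, and telescoping the product yields the key auxiliary identity
\[
x_n^{(\alpha)}!=\frac{(1-\alpha)^n\,x_n!}{p_n(\alpha)},
\]
which I shall reuse twice below.

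Next I would compute $I_n^{(\alpha)}$. Here the crucial observation is that
\[
\frac{1}{\mathcal{N}_\alpha(t)}=\frac{\mathcal{N}(\alpha t/(1-\alpha))}{\mathcal{N}(t/(1-\alpha))}=G_{\mathcal{N},1/\alpha}\!\left(\frac{\alpha t}{1-\alpha}\right),
\]
so the same generating-function identity (applied with $\eta=1/\alpha$) gives
\[
\frac{1}{\mathcal{N}_\alpha(t)}=\sum_{n=0}^{\infty}\frac{\alpha^n p_n(1/\alpha)}{(1-\alpha)^n x_n!}\,t^n.
\]
Equating this with $\sum_n (I_n^{(\alpha)}/x_n^{(\alpha)}!)(-t)^n$ and substituting the telescoped $x_n^{(\alpha)}!$ kills the $(1-\alpha)^n$ factors and leaves $I_n^{(\alpha)}=(-\alpha)^n p_n(1/\alpha)/p_n(\alpha)$. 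The second equality in the statement is then a purely formal consequence of the expansion $p_n(\eta)=\sum_{k=0}^n\binom{x_n}{x_k}I_k(-\eta)^k$ evaluated at $\eta=\alpha$ and $\eta=1/\alpha$; the only thing to check is the sign identity $(-\alpha)^n(-1/\alpha)^k=(-\alpha)^{n-k}$, which follows from $(-1)^{2k}=1$.

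Finally, for $p_n^{(\alpha)}(\eta)$ I would write
\[
\frac{\mathcal{N}_\alpha(t)}{\mathcal{N}_\alpha(\eta t)}=\mathcal{N}_\alpha(t)\cdot\frac{1}{\mathcal{N}_\alpha(\eta t)}
\]
and multiply the two series obtained in the preceding steps (the second with $t$ replaced by $\eta t$). A Cauchy product yields
\[
\frac{\mathcal{N}_\alpha(t)}{\mathcal{N}_\alpha(\eta t)}
=\sum_{n=0}^{\infty}\frac{t^n}{(1-\alpha)^n x_n!}\sum_{k=0}^{n}\binom{x_n}{x_k}p_k(1/\alpha)\,p_{n-k}(\alpha)\,(\alpha\eta)^k,
\]
and by comparing with $\sum_n (t^n/x_n^{(\alpha)}!)\,p_n^{(\alpha)}(\eta)$ and plugging in $x_n^{(\alpha)}!=(1-\alpha)^n x_n!/p_n(\alpha)$, the announced closed form drops out.

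No real obstacle is expected; the exercise is essentially bookkeeping. The only point where one must be careful is the sign/power manipulation in the $I_n^{(\alpha)}$ identity and in ensuring the telescoping of the $x_n^{(\alpha)}!$ product (where $p_0(\alpha)=1$ is needed), which is exactly what makes the $(1-\alpha)$-factors cancel throughout.
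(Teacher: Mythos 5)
Your proposal is correct and follows essentially the same route as the paper: identify the Taylor coefficients of $\mathcal{N}_\alpha$ via $G_{\mathcal{N},\alpha}(t/(1-\alpha))$ to get $x_n^{(\alpha)}$, recognize $1/\mathcal{N}_\alpha(t)$ as $G_{\mathcal{N},1/\alpha}$ evaluated at $\alpha t/(1-\alpha)$ to get $I_n^{(\alpha)}$, and then combine these (your explicit Cauchy product is just the deformed version of Eq.\eqref{eqn:PnExpr} that the paper invokes) to obtain $p_n^{(\alpha)}$. The telescoped identity $x_n^{(\alpha)}!=(1-\alpha)^n x_n!/p_n(\alpha)$ and the sign check are correctly handled.
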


\begin{proof}
First from Eq.\eqref{eqn:PnGFct} we get  
\begin{equation}
\mathcal{N}_\alpha(t)=\sum_{n=0}^\infty \dfrac{t^n}{x_n!} \dfrac{p_n(\alpha)}{(1-\alpha)^n} = \sum_{n=0}^\infty \dfrac{t^n}{x_n^{(\alpha)}!} ;
\end{equation}
then
\begin{equation}
x_n^{(\alpha)}=x_n \dfrac{(1-\alpha)p_{n-1}(\alpha)}{p_n(\alpha)}.
\end{equation}
Second we have
\begin{equation}
\dfrac{1}{\mathcal{N}_\alpha(t)}=\dfrac{\mathcal{N}(u)}{\mathcal{N}(u/\alpha)} \quad {\rm with } \quad u=\dfrac{\alpha t}{1-\alpha};
\end{equation}
then
\begin{equation}
\dfrac{1}{\mathcal{N}_\alpha(t)}=\sum_{n=0}^\infty \dfrac{u^n}{x_n!} p_n(1/\alpha)=\sum_{n=0}^\infty \dfrac{(-1)^n}{x_n^{(\alpha)}!} I_n^{(\alpha)} t^n.
\end{equation}
By identification term by term, we obtain
\begin{equation}
I_n^{(\alpha)}=(-\alpha)^n \dfrac{p_n(1/\alpha)}{p_n(\alpha)}=\dfrac{\sum_{k=0}^n \left(\begin{array}{c}x_n \\x_k \end{array}\right) I_k (-\alpha)^{n-k}}{\sum_{k=0}^n \left(\begin{array}{c}x_n \\x_k \end{array}\right) I_k (-\alpha)^k}.
\end{equation}
In order to obtain the polynomials $p_n^{(\alpha)}$ we use Eq.\eqref{eqn:PnExpr} and the expressions above. 
\end{proof}

\subsubsection{The  $\mathfrak{E}_\alpha$ deformation}
\noindent 

\begin{defi} 
\label{opdef2}
For all $\alpha \in [-1,1)$ and all $\mathcal{N} \in \Sigma_+^{({\rm norm})}$, the deformation operator $\mathfrak{E}_\alpha$ is defined as 
\begin{equation*}
\mathfrak{E}_\alpha(\mathcal{N})(t)=\mathcal{N}(t/(1-\alpha)) \mathcal{N}(-\alpha t/(1-\alpha))\, .
\end{equation*}
Note that $\mathfrak{E}_0$ is  the identity. 
\end{defi}  

\begin{prop} 
For all $\alpha \in [-1,1[$, $\mathfrak{E}_\alpha$ maps $\Sigma_+^{({\rm norm})}$ into $\Sigma_+^{({\rm norm})}$ and all $\mathfrak{E}_\alpha$ possess a common fixed point which is the generating function associated with the usual Bernoulli trial, 
namely $\mathcal{N}(t)=e^t$.
\end{prop}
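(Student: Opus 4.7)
The plan mirrors the proof of the analogous statement for $\mathfrak{D}_\alpha$. By Theorem~\ref{theo:LnN}, any $\mathcal{N} \in \Sigma_+^{({\rm norm})}$ can be written as $\mathcal{N} = e^F$ with $F \in \Sigma_0$, and normalization forces $F'(0) = \mathcal{N}'(0) = 1$. Substituting this into Definition~\ref{opdef2} gives
\begin{equation*}
\mathfrak{E}_\alpha(\mathcal{N})(t) \;=\; \exp\!\left( F\!\left(\tfrac{t}{1-\alpha}\right) + F\!\left(\tfrac{-\alpha t}{1-\alpha}\right) \right),
\end{equation*}
so the whole question reduces to showing that this exponent lies in $\Sigma_0$, after which Theorem~\ref{theo:LnN} will immediately yield $\mathfrak{E}_\alpha(\mathcal{N}) \in \Sigma_+$.

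To control the exponent, I would first note that for $\alpha \in [-1,1[$ one has $1 - \alpha > 0$, so $c := 1/(1-\alpha) > 0$. Writing $s = ct$, the exponent takes the form $F(s) + F(-\alpha s)$. Now Proposition~\ref{theo:SetOm0}(\ref{prop1c}), applied with $\eta = \alpha \in [-1,1[$, is precisely the statement that $s \mapsto F(s) + F(-\alpha s) \in \Sigma_0$. A positive rescaling of the variable clearly preserves $\Sigma_0$ (each coefficient $a_n$ gets multiplied by $c^n > 0$), so the exponent is indeed in $\Sigma_0$, and Theorem~\ref{theo:LnN} gives $\mathfrak{E}_\alpha(\mathcal{N}) \in \Sigma_+$.

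For normalization, a direct differentiation at $t=0$ yields
\begin{equation*}
\mathfrak{E}_\alpha(\mathcal{N})'(0) \;=\; \frac{\mathcal{N}'(0)\mathcal{N}(0) - \alpha \, \mathcal{N}(0)\mathcal{N}'(0)}{1-\alpha} \;=\; 1,
\end{equation*}
using $\mathcal{N}(0) = 1 = \mathcal{N}'(0)$; hence $\mathfrak{E}_\alpha(\mathcal{N}) \in \Sigma_+^{({\rm norm})}$. The common fixed point then comes from a one-line check: for $\mathcal{N}(t) = e^t$ we get $\mathfrak{E}_\alpha(\mathcal{N})(t) = e^{t/(1-\alpha)} e^{-\alpha t/(1-\alpha)} = e^{(1-\alpha)t/(1-\alpha)} = e^t$.

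No real obstacle is expected: the only substantive ingredient is the closure property $F(t) + F(-\eta t) \in \Sigma_0$, which is the second half of Proposition~\ref{theo:SetOm0}(\ref{prop1c}) and was evidently formulated with exactly this deformation in mind, complementing the first half that fuels the $\mathfrak{D}_\alpha$ proof.
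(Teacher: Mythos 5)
Your proof is correct and follows essentially the same route as the paper's: write $\mathcal{N}=e^F$ via Theorem \ref{theo:LnN}, observe that the exponent of $\mathfrak{E}_\alpha(\mathcal{N})$ is $F(t/(1-\alpha))+F(-\alpha t/(1-\alpha))$, invoke the second half of Proposition \ref{theo:SetOm0}(\ref{prop1c}) to place it in $\Sigma_0$, and check normalization and the fixed point directly. Your explicit remark that a positive rescaling of the variable preserves $\Sigma_0$ is a small point the paper leaves implicit, but it does not change the argument.
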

\begin{proof}
From the theorem \eqref{theo:LnN}, for any $\mathcal{N} \in \Sigma_+^{({\rm norm})}$, we can find some $F \in \Sigma_0$ such that $\mathcal{N}=e^F$ with $F'(0)=1$ since $\mathcal{N}$ is normalized.  
From definition (2.7), $\mathfrak{E}_\alpha (\mathcal{N})(t)$ is given by 
\begin{equation}
\mathfrak{E}_\alpha (\mathcal{N})(t)=\exp \left( F  \left( \frac{t}{1-\alpha} \right) + F  \left( -\frac{\alpha t}{1-\alpha} \right) \right) .
\end{equation}
From the proposition \eqref{theo:SetOm0} (part (\ref{prop1c})), the function $t \mapsto F  \left( \frac{t}{1-\alpha} \right) + F  \left( -\frac{\alpha t}{1-\alpha} \right)$ belongs to $\Sigma_0$ for $\alpha \in [-1,1[$, then from the theorem \eqref{theo:LnN}, $\mathfrak{E}_\alpha (\mathcal{N}) \in \Sigma_+$. Furthermore
\begin{equation}
\mathfrak{E}_\alpha (\mathcal{N})'(0)=\dfrac{1}{1-\alpha} (\mathcal{N}'(0)\mathcal{N}(0)-\alpha \mathcal{N}'(0)\mathcal{N}(0))=1.
\end{equation}
Then $\mathfrak{E}_\alpha (\mathcal{N})$ is normalized and $\mathfrak{E}_\alpha (\mathcal{N}) \in \Sigma_+^{({\rm norm})}$.\\
 Finally, it is straighforward to verify that 
$\mathfrak{E}_\alpha (\mathcal{N})=\mathcal{N}$ for $\mathcal{N}(t)=e^t$.
\end{proof}
\vspace{0.5cm}

As for $\mathfrak{D}_{\alpha}$, the deformation $\mathfrak{E}_\alpha$ allows to build new explicit CST sequences according to the following  proposition.

\begin{prop}
Let us pick some $\mathcal{N} \in  \Sigma_+^{({\rm norm})}$ associated with the normalized CST sequence $x_n=n\, \mathcal{N}^{(n-1)}(0)/\mathcal{N}^{(n)}(0)$ and the polynomials $p_n$ of Eq.\eqref{eqn:PnGFct}. Let us name $q_n$ the new polynomials defined as
\begin{equation}
\mathcal{N}(t) \mathcal{N}(-\eta t)=\sum_{n=0}^\infty \dfrac{t^n}{x_n!} q_n(\eta).
\end{equation}
The $q_n$ have the explicit expression
\begin{equation}
q_n(\eta)=\sum_{k=0}^n \left(\begin{array}{c}x_n \\x_k \end{array}\right) (-\eta)^k.
\end{equation}
 For $\alpha \in [-1,1)$, to the new function   
$\mathcal{N}_\alpha=\mathfrak{E}_{\alpha} (\mathcal{N})$  there corresponds the CST sequences $x_n^{(\alpha)}$ defined as 
\begin{equation}
\forall n \ge 1, \, x_n^{(\alpha)} = x_n \dfrac{(1-\alpha)q_{n-1}(\alpha)}{ q_n(\alpha)},
\end{equation}
and the polynomials $p_n^{(\alpha)}(\eta)$ 
\begin{equation}
\label{eqn:Pol}
p_n^{(\alpha)}(\eta)=\dfrac{1}{q_n(\alpha)} \sum_{k=0}^n \left(\begin{array}{c}x_n \\x_k \end{array}\right) p_k(\eta) p_{n-k}(\eta) ( -\alpha)^k .
\end{equation}
\end{prop}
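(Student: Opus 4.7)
The plan is to establish the three assertions by direct manipulations of generating series, in close parallel with the proof given for $\mathfrak{D}_\alpha$. For the closed-form expression of $q_n(\eta)$, I would simply apply the Cauchy product to $\mathcal{N}(t)=\sum_n t^n/x_n!$ and $\mathcal{N}(-\eta t)=\sum_k (-\eta)^k t^k/x_k!$ and extract the coefficient of $t^n/x_n!$; the factor $x_n!$ combined with $1/(x_k!\,x_{n-k}!)$ reproduces $\binom{x_n}{x_k}$, which gives the announced formula.

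For the new CST sequence, I would set $s=t/(1-\alpha)$ in Definition \ref{opdef2} to rewrite $\mathcal{N}_\alpha(t)=\mathcal{N}(s)\mathcal{N}(-\alpha s)$, and then plug in the expansion just established with $\eta$ replaced by $\alpha$:
\begin{equation*}
\mathcal{N}_\alpha(t)=\sum_{n=0}^\infty \dfrac{q_n(\alpha)}{(1-\alpha)^n\, x_n!}\, t^n.
\end{equation*}
Identifying this with $\sum_n t^n/x_n^{(\alpha)}!$ yields $x_n^{(\alpha)}!=(1-\alpha)^n x_n!/q_n(\alpha)$, and computing the ratio $x_n^{(\alpha)}!/x_{n-1}^{(\alpha)}!$ immediately produces the announced formula for $x_n^{(\alpha)}$.

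For the polynomials $p_n^{(\alpha)}(\eta)$, I would exploit the multiplicative structure of $\mathfrak{E}_\alpha$: with the same substitution $s=t/(1-\alpha)$,
\begin{equation*}
\dfrac{\mathcal{N}_\alpha(t)}{\mathcal{N}_\alpha(\eta t)}=\dfrac{\mathcal{N}(s)}{\mathcal{N}(\eta s)}\,\dfrac{\mathcal{N}(-\alpha s)}{\mathcal{N}(-\alpha\eta s)},
\end{equation*}
and each factor is an instance of $G_{\mathcal{N},\eta}$ from Eq.\eqref{eqn:PnGFct} evaluated at $s$ and at $-\alpha s$, respectively, so both expand in terms of the original polynomials $p_k(\eta)$. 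Taking the Cauchy product, converting $s^n$ back to $t^n/(1-\alpha)^n$, and comparing with $\sum_n t^n p_n^{(\alpha)}(\eta)/x_n^{(\alpha)}!$ using the value of $x_n^{(\alpha)}!$ obtained above, isolates $p_n^{(\alpha)}(\eta)$ in exactly the form \eqref{eqn:Pol}.

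All steps reduce to routine generating-function bookkeeping, and I do not anticipate any conceptual obstacle. The only mild subtlety is to keep track of the two independent powers of $(1-\alpha)$ that enter, once through the change of variable $s=t/(1-\alpha)$ and once through the formula for $x_n^{(\alpha)}!$; their cancellation is precisely what leaves the final formula for $p_n^{(\alpha)}$ depending on $\alpha$ only through the normalizing factor $q_n(\alpha)^{-1}$ and the weights $(-\alpha)^k$, and it is the sign change of $\alpha$ in the second factor of $\mathfrak{E}_\alpha$ that substitutes $q_n$ here for the role played by $p_n$ in the $\mathfrak{D}_\alpha$ case.
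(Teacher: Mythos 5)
Your proposal is correct and follows essentially the same route as the paper: identify $x_n^{(\alpha)}!$ from the expansion of $\mathcal{N}_\alpha(t)=\mathcal{N}(s)\mathcal{N}(-\alpha s)$ with $s=t/(1-\alpha)$, then factor $\mathcal{N}_\alpha(t)/\mathcal{N}_\alpha(\eta t)=G_{\mathcal{N},\eta}(s)\,G_{\mathcal{N},\eta}(-\alpha s)$ and read off $p_n^{(\alpha)}$ by a Cauchy product. You in fact spell out the bookkeeping (the two cancelling powers of $(1-\alpha)$ and the Cauchy-product origin of $q_n$) that the paper leaves implicit.
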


\begin{proof}
From Eq.\eqref{eqn:PnGFct}, we have
\begin{equation}
\mathcal{N}_\alpha(t)=\sum_{n=0}^\infty \dfrac{t^n}{x_n!} \dfrac{q_n(\alpha)}{(1-\alpha)^n} = \sum_{n=0}^\infty \dfrac{t^n}{x_n^{(\alpha)}!}
\end{equation}
and so 
\begin{equation}
x_n^{(\alpha)}=x_n \dfrac{(1-\alpha)q_{n-1}(\alpha)}{q_n(\alpha)}.
\end{equation}
Then we have
\begin{equation}
\dfrac{\mathcal{N}_\alpha(t)}{\mathcal{N}_\alpha(\eta t)}= G_{\mathcal{N}, \eta} \left( \dfrac{t}{1-\alpha} \right) G_{\mathcal{N}, \eta} \left( -\dfrac{\alpha t}{1-\alpha} \right) .
\end{equation}
Finally, from the series expansion of $G_{\mathcal{N}, \eta}$ we deduce the expression \eqref{eqn:Pol} for polynomials $p_n^{(\alpha)}$. 
\end{proof}

%%%%%%%%%%%%%%%%%%%%%%%%%%%
\subsubsection{Deformed-related transformations in $\Sigma_+^{({\rm norm})}$}
We have established that the deformation $\mathfrak{D}_\alpha$ maps a  deformed exponential $\mathcal{N}(t)$ $\in \Sigma_+^{({\rm norm})}$
into the deformed exponential 

\begin{equation}
\label{rg1}
\mathcal{N}_{\alpha}^{(1)}(t) \equiv \frac{\mathcal{N}\left(\frac{t}{1-\alpha}\right)}{\mathcal{N}\left(\frac{\alpha t}{1-\alpha}\right)} \in \Sigma_+^{({\rm norm})} \, .
\end{equation}
The exponential $\bar{\mathcal{N}}(t)=e^t$ is left invariant under such a transformation for any value of $\alpha$. If we apply again the 
transformation 
$\mathfrak{D}_\alpha$ to $\mathcal{N}_\alpha^{(1)}(t)$ 
we get a new 
deformed exponential $\mathcal{N}_{\alpha}^{(2)}(t)$, also belonging to 
$\Sigma_+^{ ( {\rm norm} ) }$. 
Successive application of $\mathfrak{D}_\alpha$ generate a flow in the set of deformed exponentials belonging to $\Sigma_+^{({\rm norm})}$. 
In order to study the stability of the fixed point $\bar{\mathcal{N}(t)} = e^t$ with respect to $\mathfrak{D}_\alpha$, we recall that a deformed exponential belonging to $\Sigma_+^{({\rm norm})}$ can be written as $e^{F(t)}$, where $F(t) = \sum_{k=0}^\infty f_k t^k$, with $f_0 =0$, $f_1 = 1$ and $f_n \geq 0$ for $n \geq 2$. Applying $\mathfrak{D}_\alpha$ to $\mathcal{N}(t) = \exp\left(F(t)\right)$, see Eq.(\ref{dalpha1}), we obtain the expression 
\begin{equation}
\label{rg2}
\mathcal{N}_{\alpha}^{(1)}(t) = \exp\left(\sum_{k=0}^\infty f_k 
\frac{(1-\alpha^k)}{(1-\alpha)^k} t^k \right) \, .
\end{equation}
Considering each $f_k$ ($\geq 0$) as an axis in an infinite (or finite, if $f_k =0$ for $k \geq k_{max}$)
dimensional space where the transformation acts, we can see that the effect of this transformation is to change each axis  by a factor depending on $k$ and $\alpha$, $f_k \rightarrow f_k \frac{(1-\alpha^k)}{(1-\alpha)^k}$. Two cases have to be considered here: 

\begin{itemize} 
\item  
$-1 \leq \alpha < 0$.  The factor $(1-\alpha^k) / (1-\alpha)^k$ is smaller than 1 for $k \geq 2$ and it is equal to one if 
$k=1$ ($f_0=0$). 
Thus, repeated actions of $\mathfrak{D}_\alpha$ bring $\mathcal{N}(t)$ to  $\bar{\mathcal{N}}(t)= e^t$. Thus the latter  is a 
stable fixed point with respect to $\mathfrak{D}_\alpha$, for the considered range of values of $\alpha$. All the eigenvalues of the transformation at  this fixed point are smaller than one for $k \geq 2$; then, all directions $f_k$ are stable, at the exception of one eigenvalue which is equal to 1 and which  corresponds to  the direction of the ``axis'' $f_1$,  (marginal value and highest eigenvalue).
\item 
$0< \alpha < 1$. The factor $(1-\alpha^k) / (1-\alpha)^k$ is greater than 1 for $k \geq 2$ and it is equal to one if $k=1$ ($f_0=0$). The direction $f_1$ has now the lowest eigenvalue. All the others directions $f_k$, for $k \geq 2$, have increasing eigenvalues as $k$ increases, all of them greater than one (unstable directions). Thus, the fixed point $\bar{\mathcal{N}}(t) = e^t$ is unstable in all directions but one ($f_1$), and 
$\mathfrak{D}_\alpha$ sends the 
deformed exponentials away from the fixed point. In fact, the new fixed points are not element of the 
set $\Sigma_+^{({\rm norm})}$, but instead belong to a more general set $\Sigma_G$.

\end{itemize}
\noindent
Regarding the $\mathfrak{E}_\alpha$ deformation, we can perform 
a similar analysis and we get the same quantitative result for the deformation parameter 
$\alpha$. When $-1 \leq \alpha  < 0$ the fixed point $\bar{\mathcal{N}}(t) = e^t$ is stable with respect to $\mathfrak{E}_\alpha$ transformation and when $0 < \alpha < 1$ it is unstable.

\subsubsection{Generating new functions of $\Sigma_+^{({\rm norm})}$ from a known function of $\Sigma_+^{({\rm norm})}$}
It is possible to construct an arbitrary number of functions belonging to 
$\Sigma_+^{({\rm norm})}$ starting from a known function belonging to $\Sigma_+^{({\rm norm})}$ and
using the following proposition: 
\begin{prop}
Consider a function $\mathcal{N}$ belonging to $\Sigma_+^{({\rm norm})}$. Then, by proposition  (\ref{theo:LnN}),  $\mathcal{N}$ can be written as $\mathcal{N}(t) = e^{F(t)}$, where $F(t)$ belongs to $\Sigma_0$ with $F^\prime(0) = 1$. 
The new functions $F_H(t) = F(t) + H(t)$ where 
$H(t) = \sum_{n=0}^\infty h_n t^n$,   
satisfying $H(0)=0$, 
$H^\prime(0)=0$ 
and having $h_n \geq 0$ for any $n \geq 2$, 
also belong to the set $\Sigma_0$. The new deformed 
exponentials 
$\mathcal{N}_H(t) = e^{F(t) + H(t)}$ belong to $\Sigma_+^{({\rm norm})}$.
\end{prop}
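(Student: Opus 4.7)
The plan is to reduce the proposition to two routine verifications, both essentially immediate from the characterization $\Sigma_+ = \{e^F \mid F \in \Sigma_0\}$ established in Theorem \ref{theo:LnN}.

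First I would show that $F_H = F + H \in \Sigma_0$. Writing $F(t) = \sum_{n\ge 0} f_n t^n$ with $f_0 = 0$, $f_1 = 1$, and $f_n \ge 0$ for $n \ge 2$, and $H(t) = \sum_{n\ge 0} h_n t^n$ with $h_0 = h_1 = 0$ and $h_n \ge 0$ for $n \ge 2$, the sum $F_H(t) = \sum_{n\ge 0} (f_n + h_n) t^n$ has constant term $0$, linear coefficient $f_1 + h_1 = 1 > 0$, and nonnegative coefficients of order $\ge 2$. One also needs a non-vanishing radius of convergence: if $\rho_F, \rho_H > 0$ are the convergence radii of $F$ and $H$, then $F_H$ converges on $|t| < \min(\rho_F, \rho_H) > 0$. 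Hence $F_H \in \Sigma_0$.

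Next I would apply point (2) of Proposition \ref{theo:SetOmP} (equivalently the characterization \eqref{characterization1}) to $F_H \in \Sigma_0$ to conclude $\mathcal{N}_H = e^{F_H} \in \Sigma_+$. It remains to check the normalization $\mathcal{N}_H'(0) = 1$: differentiating $\mathcal{N}_H = e^{F+H}$ gives $\mathcal{N}_H'(0) = (F'(0) + H'(0))\, e^{F(0)+H(0)} = (1 + 0)\cdot e^0 = 1$, so indeed $\mathcal{N}_H \in \Sigma_+^{({\rm norm})}$.

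There is no real obstacle here; the only point requiring any care is the radius of convergence (to ensure $F_H$ qualifies as an element of $\Sigma_0$, which was defined as an entire series with positive radius of convergence), but this follows immediately from taking the minimum of the two radii. Everything else is a direct coefficient check plus invocation of the previously proved characterization of $\Sigma_+$.
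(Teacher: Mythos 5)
Your proposal is correct and follows essentially the same route as the paper's own proof: a coefficient-by-coefficient check that $F_H \in \Sigma_0$, followed by an appeal to the characterization of $\Sigma_+$ as $\{e^F \mid F\in\Sigma_0\}$ and the observation that $\mathcal{N}_H'(0)=1$. Your extra remark about the radius of convergence of $F_H$ is a small point of care that the paper omits, but it does not change the argument.
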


\begin{proof}
As $F(t)$ belongs to $\Sigma_0$  
and generates a $\mathcal{N}(t)$ belonging to $\Sigma_+^{({\rm norm})}$, 
 it can be written as the series 
$\sum_{n=0}^\infty a_n t^n$ where $a_0=0$, $a_1=1$ and $a_n \geq 0$ 
for $n \geq 2$. The function $F_H(t)$ can be written as 
$F_H(t) = \sum_{n=0}^\infty (a_n + h_n) t^n$. As $a_0 + h_0 =0 $,  
$a_1 + h_1 = 1$ and $a_n + h_n \geq 0$ for any $n \geq 2$, this 
implies that $F_H(t)$ belongs to $\Sigma_0$. Then, by proposition  
2.3 the new deformed exponential $\mathcal{N}_H(t)$ belongs 
to $\Sigma_+$. As $\mathcal{N}_H^\prime(0) =1$, it belongs to $\Sigma_+^{({\rm norm})}$.
\end{proof}

%%%%%%%%%%%%%%%%%%%%%%%%%
\subsubsection{The  $\eta$ deformation}
The expressions given by Eqs. (\ref{pgotico}) and (\ref{somapgotico}) can 
also be deformed by considering a monotonically increasing function $g(\eta)$, with $g(0) =0$ and $g(1)=1$, leading to a deformation of the polynomial $p_{n-k}(\eta)$ when $[0,1] \ni  \eta \mapsto g(\eta) \in [0,1]$. Eq.(\ref{pgotico}) can be rewritten as 
\begin{equation}
\label{pgoticodef}
\mathfrak{p}_k^{(n)}(\eta)=\dfrac{x_n!}{x_{n-k}! x_k!}\,  g(\eta)^k \, p_{n-k}(g(\eta)) \, ,
\end{equation}
satisfying 
\begin{equation}
\label{defsoma}
\sum_{k=1}^n \mathfrak{p}_k^{(n)}(\eta) =1\, ,
\end{equation}
and generating new polynomials $\bar{p}_{n-k}(\eta) \equiv p_{n-k}(g(\eta))$ that 
are positive for $\eta \in [0,1[$  if the polynomials $p_{n-k}(\eta)$
are positive for $\eta \in [0,1[$ (i.e., if they are associated with a sequence of   
 $\Sigma_+$).

%%%%%%%%%%%%%%%
\section{Examples}
\label{examples}
The corollary \ref{coro:SetOm0} provides different examples of functions of $\Sigma_0$. Combining this material with  Theorem 2.1 
enables us to assert the following:

\begin{theo}
\label{theo:Example1}
To any $F\in \Sigma_0$ there correspond the following functions $H$ in 
$\Sigma_+$
\begin{enumerate}
\item 
\begin{align*}& H=e^F\, ,
  \exp(e^F-1)\, , \\
   &\forall a\, , \, b>0\,, \, H= (1-b F)^{-a}\, ,   \\
  & \forall a\,,\,b>0\, , \, \forall \alpha \in [0,1]\, , \, H= (1+\alpha b F)^{a} (1-b F)^{-a}\,,\\
  &\forall a\, , \, b>0\,, \, H=(1+b-b e^F)^{-a}\, .   \\
\end{align*}
\item For any sequence $\{a_n \}_{n \in \mathbb{N}}$ of non-negative real numbers such that $\sum_{n=0}^\infty a_n < \infty $, and for any  $\alpha \in [0,1]$, the function $\mathcal{N}$ defined as
\begin{equation}
\label{ngeral}
\mathcal{N}(t)=\prod_{k=0}^\infty \dfrac{1+ \alpha a_k t}{1-a_k t}
\end{equation}
belongs to $\Sigma_+$ (the assumption  $\sum_{k=0}^\infty a_k < \infty $ is needed to obtain a convergent product).
\end{enumerate}
\end{theo}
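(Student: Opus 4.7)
The whole proof hinges on Theorem \ref{theo:LnN}, which gives the characterization $\Sigma_+=\{e^F : F\in \Sigma_0\}$. So for every candidate $H$ in Part (1), the plan is simply to exhibit $H$ in the form $H=e^G$ for some $G\in\Sigma_0$, assembled from the building blocks supplied by Corollary \ref{coro:SetOm0} and the closure properties in Proposition \ref{theo:SetOm0}.

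For the five items of Part (1) I would argue as follows. The case $H=e^F$ is immediate. For $H=\exp(e^F-1)$, set $G=e^F-1$, which lies in $\Sigma_0$ by Corollary \ref{coro:SetOm0} with $a=1$. For $H=(1-bF)^{-a}$, write $H=e^{G}$ with $G=-a\ln(1-bF)$; the corollary places $G$ in $\Sigma_0$. The case $H=(1+\alpha b F)^a(1-bF)^{-a}=\exp\!\bigl(a\ln\tfrac{1+\alpha b F}{1-bF}\bigr)$ is handled identically by the third entry of the corollary. The last case needs only a small algebraic rewrite: $1+b-be^F=1-b(e^F-1)$, and since $e^F-1\in\Sigma_0$, applying the corollary a second time with $F$ replaced by $e^F-1$ shows $-a\ln(1-b(e^F-1))\in\Sigma_0$, whence $H\in\Sigma_+$.

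For Part (2), the plan is to take the logarithm of the infinite product and verify that it lies in $\Sigma_0$. Formally,
\begin{equation*}
\ln\mathcal{N}(t)=\sum_{k=0}^{\infty}\bigl[\ln(1+\alpha a_k t)-\ln(1-a_k t)\bigr]=\sum_{n=1}^{\infty}\frac{1-(-\alpha)^n}{n}\Bigl(\sum_{k=0}^{\infty}a_k^n\Bigr)t^n.
\end{equation*}
For $\alpha\in[0,1]$ one has $1-(-\alpha)^n\ge 0$, the linear coefficient equals $(1+\alpha)\sum_k a_k>0$ (assuming, as implicit in the statement, that the $a_k$ are not all zero), and all higher coefficients are nonnegative. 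The analyticity near $t=0$ follows from the crude bound $a_k\le S:=\sum_j a_j<\infty$, which yields $\sum_k a_k^n\le S^{n-1}\sum_k a_k=S^n$ and hence a radius of convergence at least $1/S>0$. Thus $\ln\mathcal{N}\in\Sigma_0$, and Theorem \ref{theo:LnN} gives $\mathcal{N}=e^{\ln\mathcal{N}}\in\Sigma_+$.

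The only genuine obstacle is in Part (2): one must justify that the termwise logarithm of the infinite product is legitimate and that the resulting series defines an element of $\Sigma_0$ (nonnegative coefficients, strictly positive linear coefficient, nonzero radius of convergence). Everything else reduces to repeated pattern-matching against Corollary \ref{coro:SetOm0}, modulo the one identity $1+b-be^F=1-b(e^F-1)$ used in the fifth subcase.
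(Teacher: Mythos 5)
Your proof is correct. For Part (1) it coincides with the paper's argument: both reduce each candidate to the form $e^G$ with $G\in\Sigma_0$ assembled from Lemma \ref{lem:SetOm0} / Corollary \ref{coro:SetOm0}, and then invoke the exponential characterization of $\Sigma_+$ (Proposition \ref{theo:SetOmP}, point (2), equivalently Theorem \ref{theo:LnN}); your explicit rewrite $1+b-be^F=1-b(e^F-1)$ merely makes visible a composition step the paper leaves implicit. For Part (2) you take a genuinely different route. The paper stays multiplicative: each factor $t\mapsto(1+\alpha a_k t)(1-a_k t)^{-1}$ is itself an element of $\Sigma_+$ (a special case of Part (1) with $F(t)=t$), and the product is then placed in $\Sigma_+$ by the closure of $\Sigma_+$ under products (Proposition \ref{theo:SetOmP}, point (1)). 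You instead work additively in $\Sigma_0$: you expand $\ln\mathcal{N}$ as an explicit power series, check the sign $1-(-\alpha)^n\ge 0$ of every coefficient, and bound the radius of convergence below by $1/S$ with $S=\sum_k a_k$. The paper's version is shorter but silently extends a two-factor closure property to an infinite product; your computation supplies exactly the missing justification of that limit (convergence and nonnegativity of the summed coefficients), so the two arguments are complementary rather than redundant. Two minor caveats, both shared with the paper: the case where all $a_k$ vanish must be excluded (otherwise $\mathcal{N}=1\notin\Sigma$), which you correctly flag as implicit in the statement, and the termwise interchange of the sums over $k$ and $n$ is legitimate because all terms involved are nonnegative, a point worth stating explicitly.
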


\begin{proof}
 Point (1) follows  from Lemma \ref{lem:SetOm0} and 
point (2) of Proposition \ref{theo:SetOmP}. Therefore, functions 
$t \mapsto (1+ \alpha a_k t)(1- a_k t)^{-1}$ belong to $\Sigma_+$, 
and it follows from 
Proposition \ref{theo:SetOmP} (point (1))  that 
point (2) holds true. 
\end{proof}

\vspace{0.5cm}

\noindent \textit{Remark} The analytical example from $q$-calculus presented in a previous article \cite{curadoetal2011} is a special case of this theorem corresponding to the choice $a_k=(1-q)q^k$ for $0<q<1$ and $\alpha=0$.

%%%%%%%%%%%%%%%%  
\subsection{Example 1}

Let us consider the function $\mathcal{N}(t) \in \Sigma_+$, given by
\begin{equation}
\label{nex1}
\mathcal{N}(t) = (1-a t)^{-n} \,,
\end{equation}
whose radius of convergence is $t<1/a$, for $a>0$. Function (\ref{nex1}) is obtained from Eq.(\ref{ngeral}) by taking $\alpha = 0$, $a_k=a$ for $0 \leq k \leq n-1 $ and $a_k = 0$ for $k > n-1$. 

The explicit expressions for $x_k^{(n)}!$ are directly calculated by power expansion from 
\begin{equation}
\mathcal{N}(t) = (1-at)^{-n} = \sum_{k=0}^\infty \dfrac{t^k}{x_k^{(n)}!} ,
\end{equation}
giving
\begin{equation}
\label{xk1}
x_k^{(n)}! = \frac{k!}{a^k} \frac{(n-1)!}{(n-1+k)!} \,.
\end{equation}
It follows immediately that
\begin{equation}
\label{xk2}
x_k^{(n)} = \frac{k}{a (n-1+k)}\,. 
\end{equation} 

Notice that when $k \rightarrow \infty$ the coefficients $x_k^{(n)}$ approach the finite value $x_\infty^{(n)} = 1/a$, which does not depend on $n$. Using Eq.(\ref{xk2}) we can deduce a recursion relation for $x_k^{(n)}$, that can be written as: 
\begin{equation}
\label{xkrecrel}
x_{k+1}^{(n)} = \frac{a x_k^{(n)} (n-2) +1   }{a (n- a x_k^{(n)})   }\,.
\end{equation}

Using the expressions for $x_k^{(n)} $ and the generating function, Eq.(\ref{eqn:PnGFct}), 
it is possible to find an explicit expression for the probabilities $p_k(\eta)$, 
valid for $0 \leq \eta \leq 1$:
\begin{eqnarray}
\label{pk1}
p_k^{(n)}(\eta)& = &\sum_{j=0}^{min(k,n)} \binom{n}{j} (-1)^j 
 \frac{(n-1+k-j)!}{(n-1+k)!} \frac{k!}{(k-j)!} \, \eta^j\, \\
 & = & _2F_1(-n,-k;1-k-n;\eta) \, .
\end{eqnarray}
The explicit expressions for the first seven polynomials as well as their figures are shown in the Appendix.

%%%%%%%%%%%%%%%%%%%
\subsection{Example 2}  
Let us consider $\mathcal{N}(t)$ given by Eq.(\ref{ngeral}) with $\alpha=0$: 
\begin{equation}
\label{ex2}
\mathcal{N}(t) =  \prod_{k=0}^{\infty} \frac{1}{   (1-a_k t)} \,.
\end{equation}
We can divide this example into two classes: either an infinite sequence of $a_k$, all of them different from zero, or a finite sequence, where only a finite number of $a_k$ are  different from zero. 

%%%%%%%%%%%%%%%%%%%%
\subsubsection{\it Finite sequence class} 

As the case with only 
$a_1$ different from zero belongs to the example 1, with $n=1$, 
the simplest case here is when only two $a_k$'s are different from zero, with $0 < a_1 < a_2$ and radius of convergence  
$t < 1/a_1$. The deformed exponential $\mathcal{N}(t)$ is written as 
\begin{equation}
\label{ex2b}
\mathcal{N}(t) = \frac{1}{   (1-a_1 t)} \frac{1}{   (1-a_2 t)} \,
\end{equation}
and we have 
\begin{equation}
\label{xkfatex2}
x_k! = \frac{1}{\sum_{i=0}^{k} a_1^i a_2^{k-i}} \, 
\end{equation}
and
\begin{eqnarray}
\label{*}
x_k &=& \frac{\sum_{j=0}^{k-1} a_1^j a_2^{k-1-j}}{\sum_{i=0}^{k} a_1^i a_2^{k-i}}  = 
\frac{1}{a_2} \frac{(a_1/a_2)^{k} -1}{(a_1/a_2)^{k+1} -1}  \\
\label{xkex2}
&=& \frac{1}{a_2} 
\frac{[k+1]_{a_1/a_2}}{[k+2]_{a_1/a_2}} \, ,
\end{eqnarray}
where $[n]_q \equiv (q^{n+1})/ (q-1)$. 

\noindent
From Eq.(\ref{xkex2}) we can see that the limit of $x_k$ when $k \rightarrow \infty$ is 
equal to $1/a_2$ ($a_2 > a_1$). 

$p_k(\eta)$ is a polynomial in  $\eta$ whose maximum degree is two and its
general expression, for $k \geq 2$, is 
%%%%%%%%%%%%%%%%%%%%%%%%%%%
\begin{eqnarray}
p_k(\eta) &=& 1 - \eta \left(1+
\frac{\sum_{i=0}^{k-2} a_1^{i+1} a_2^{k-i-1}    }{\sum_{i=0}^{k} a_1^i a_2^{k-i} }
\right) + \eta^2 
\frac{\sum_{i=0}^{k-2} a_1^{i+1} a_2^{k-1-i} }{\sum_{i=0}^{k} a_1^i a_2^{k-i} } \\
 & = & 1- \eta \left(1+ \frac{1}{a_2} \frac{  [k-1]_{a_1/a_2} }{[k+1]_{a_1/a_2}}   \right) +
 \eta^2 \left(   \frac{1}{a_2} \frac{  [k-1]_{a_1/a_2} }{[k+1]_{a_1/a_2}}   \right)
\, .
\end{eqnarray}
The first polynomials as well as their figures are shown in the Appendix. 

%%%%%%%%%%%%%%%%%%%%%%%%%%
\subsubsection{\it Infinite sequence class} 
In this class all $\{a_k\}$ are different from zero.  In order to satisfy the condition $\sum_i a_i < \infty$ (see Theorem \ref{theo:Example1}) the 
decrease of $a_k$ could be of the power type $a_k \propto 1/k^\alpha$, with $\alpha > 1$, or of the  exponential type on $k$. The latter case  is found precisely in  $q$-calculus  with $q<1$.

%%%%%%%%%%%%%%%%%%%%%%%%%%%%%
\subsection{Example 3: nontrivial case with infinite radius of convergence}
The simplest case with infinite convergence radius other than the exponential (trivial) case, $\mathcal{N}(t) = e^t$,  is  
\begin{equation}
\label{gauss}
\mathcal{N}(t) = \exp \left(t + \frac{a}{2} t^2\right)                   
\end{equation}
with $a>0$. The generating function of the polynomials $p_n(\eta)$ can be written as: 
\begin{equation}
\label{t2}
G_{\mathcal{N},\eta}(t) \equiv \frac{\mathcal{N}(t) }{ \mathcal{N}(\eta t)} = 
\exp \left(     1-\eta) t + \frac{a}{2} (1-\eta^2) t^2      \right) \,.
\end{equation}
The coefficients $x_n$ and the polynomials $p_n(\eta)$ can formally be obtained from Eqs. (\ref{gauss}) and (\ref{t2})
by means of the relations 
\begin{eqnarray}
 x_n & = &  
n \,  \mathcal{N}^{(n-1)}(0)/\mathcal{N}^{(n)}(0) \\
 p_n(\eta) & = & \frac{x_n!}{n!} \,  G_{\mathcal{N},\eta}^{(n)}(0) \, ,
 \end{eqnarray}
where $\mathcal{N}^{(n)}(0)$, $G^{(n)}(0)$ mean the $n$-th derivatives of $\mathcal{N}, G$, respectively, with respect to $t$, evaluated at $t=0$. 

%%%%%%%%%%%%%%%%%%%%%%%%%%%%
\subsubsection{Relation with Hermite polynomials and calculation of $x_n$}
It is in fact possible to obtain analytical expressions for  the coefficients $x_n$ and the polynomials $ p_n(\eta)$ above. In that sense, let us remember the case of the  Hermite polynomials. One of their generating functions  \cite{magnus,gradstein,abramovitz} is:  
\begin{equation}
\label{genhermite1}
 \exp\left(2 x \tau -  \tau^2 \right) = \sum_{n=0}^\infty H_n(x) \frac{\tau^n}{n!} \, .
\end{equation}
Making $\tau = i \sqrt{a/2} \, t$ and $x = -i / \sqrt{2 a}$, it is possible 
to formally write $\mathcal{N}(t)$ given by Eq.(\ref{gauss}) as: 
\begin{equation}
\label{defexp2}
\mathcal{N}(t) = \exp\left(t + \frac{a}{2} t^2 \right) = \sum_{n=0}^\infty 
H_n \left( \frac{-i}{\sqrt{2 a}}  \right) \frac{i^n \left(\frac{a}{2}\right)^{n/2}}{n!} \, t^n \, .
\end{equation}
This implies that the factorial, $x_n!$, can be written, formally, as: 
\begin{equation}
\label{xnfat2} 
x_n! = \left[  \frac{i^n \left(\frac{a}{2}\right)^{n/2}}{n!} H_n \left( 
\frac{-i}{\sqrt{2 a}}
\right) \right]^{-1} \,.
\end{equation}
The Hermite polynomial has a formal expansion given by 
(see \cite{magnus}, p.249 or \cite{abramovitz}, p.772):
\begin{equation}
\label{Hermexp} 
 H_n (x) = n! \sum_{m=0}^{\left[ n/2  \right] } \frac{(-1)^m \,(2 x)^{n-2 m} }
 { m! \, (n-2 m)!  } \, ,
\end{equation}
where $[n/2]$ 
is the greatest integer less than or equal to $n/2$ (floor function).
Substituting Eq.(\ref{Hermexp}) in Eq.(\ref{xnfat2}) and putting $x = -i / \sqrt{2a}$ we have the analytical expression for $x_n!$ in function of $a$,  
\begin{equation}
\label{xnfat3} 
 x_n! = \left[  \sum_{m=0}^{[n/2]} 
 \frac{ \left(\frac{a}{2}\right)^{m} } {m! \, (n-2m)!}  .
\right]^{-1} \,,
\end{equation}
 
The sequence element 
$x_n = x_n! / x_{n-1}!$ is then found to be: 
\begin{equation}
\label{xn2} 
 x_n = \frac{x_n!}{x_{n-1}!} = \frac{
   \sum_{j=0}^{[(n-1)/2]} 
 \frac{ \left( a/2\right)^{j} } {j! \, (n-1-2j)!} }
 { \sum_{m=0}^{[n/2]} 
 \frac{ \left( a/2 \right)^{m} } {m! \, (n-2m)!}} \, .
\end{equation}

%%%%%%%%%%%%%%%%%%%%%%%%%%%%
\subsubsection{Asymptotic behavior of $x_n$}

In order to get the asymptotic behavior of $x_n$, we note that the known Hermite polynomials recurrence relation \cite{magnus,abramovitz,gradstein}, 
\begin{equation}
\label{hermrecu2}
H_{n+1}(x) = 2 x H_{n}(x) - 2n H_{n-1}(x) \, ,
\end{equation}
when used in Eqs.(\ref{xnfat2}) and (\ref{xn2}), yields a simple recursion relation for the coefficients $x_n$ 
given by
\begin{equation}
\label{xnrecu2}
x_{n+1} = \frac{n+1}{1+a x_n} \, . 
\end{equation}
With the assumption that $x_n$ tends to infinity when $n \rightarrow \infty$, 
we are led 
to $x_n^2 \approx n/a$, from which follows the asymptotic behavior, 
valid for $n$ even or odd, 
\begin{equation}
\label{xnlim} 
x_n \sim \sqrt{\frac{n}{a}} + \cdots \,.
\end{equation}

\noindent
On the other side, the hypothesis that $x_n$ tends to a finite value when $n \rightarrow \infty$ leads to a contradiction. Therefore, in the case of Hermite polynomials the sequence $x_n$ goes to infinity as $\sqrt n$, unlikely the other two examples, $1$ and $2$, shown above, for which $x_n$ tends to a finite value as $n \to \infty$. 

\vspace{0.5cm}

Due to the previous results and to numerical verifications up to large values of $n$ we can propose the following conjectures. 

\noindent
{\bf Weak conjecture:} 

\noindent
{\it Conjecture 1:} In the case of the deformed exponential 
$\mathcal{N}(t) = \exp(t + (a/3) t^3)$ we also have a simple recursion relation for the new $x_n$ coefficients, given by 
\begin{equation}
\label{xnrecu3}
x_{n+1} = \frac{n+1}{1+a x_n x_{n-1}} \, , 
\end{equation}
leading to the asymptotic behavior $x_n \sim (n/a)^{1/3}$. 
A natural conjecture, that has been numerically verified, is that 
a general deformed exponential $\mathcal{N}(t) = \exp(t + (a/m) t^m)$ has coefficients 
satisfying the recursion relation $x_{n+1} = (n+1) / (1+ a x_n x_{n-1}   \cdots 
 x_{n-m+2} )$
and having the asymptotic behavior $x_n \sim (n/a)^{1/m}$ when $n \rightarrow \infty$. 

\vspace{0.5cm}
\noindent
{\bf Strong conjecture:} \\
We have also verified numerically 
a strong version of the previous conjecture that can be formulated as: 

\noindent
{\it Conjecture 2:} The deformed exponential $\mathcal{N}(t) = \exp(\sum_{n=1}^m a_n t^n)$, with $a_1 =1$ and $a_n > 0 $ for all $n \geq 2$,
if written as $\mathcal{N}(t) = \sum_{n=1}^\infty  t^n/x_n! $
 has a simple recursion relation for the new $x_n$ coefficients, given by 
\begin{equation}
\label{xnrecu4}
x_{n+1} = \frac{n+1}{1+ a_2 x_n + a_3 x_n x_{n-1} + a_4 x_n x_{n-1} x_{n-2} + \cdots + a_m x_n x_{n-1}  \cdots x_{n-m+2}
 } \, , 
\end{equation}
leading to the asymptotic behavior $x_n \sim (n/a_m)^{1/m}$ when $n \rightarrow \infty$. 

%%%%%%%%%%%%%%%%%%%%%%%%%%%% 
\subsubsection{Polynomials}

From the generating function of the deformed polynomials $p_n(\eta)$ given by Eq.(\ref{t2}), by means of the substitution
\begin{eqnarray}
\tau  & = & i \sqrt{\frac{a}{2} (1-\eta^2)} \, t  \\
x  & = & -i \, \sqrt{ \frac{ 1-\eta}{2  a (1+\eta)}        } \,,
\end{eqnarray}
we obtain
\begin{equation}
 \exp\left(2 x \tau -  \tau^2 \right) =  G_{\mathcal{N},\eta}(t) \, ,
\end{equation}
and therefore the generating function can be identified with the Hermite polynomial through Eq. (\ref{genhermite1}). 
Eq.(\ref{eqn:PnGFct}) allows us to write for the polynomials $p_n(\eta)$ the expression  
\begin{equation}
\label{polygeral}
p_n(\eta) =  i^n \, \left(  \frac{a}{2} (1-\eta^2)   \right)^{n/2} \,  \frac{x_n!}{n!} \, 
H_n \left(-i \, \sqrt{      \frac{ 1-\eta}{2  a (1+\eta) }     } \,\,  \right) \, , 
\end{equation} 
where $x_n!$ is given by Eq.(\ref{xnfat3}).  
Using Eqs.(\ref{Hermexp}) and (\ref{xnfat3}) we can write a general explicit formula for the deformed polynomials generated by Eq.(62):
\begin{equation}
\label{polygeral2}
p_n(\eta) =\left[ 
\frac{a}{2} (1-\eta^2) 
\right]^{n/2}  \,
\frac{\sum _{j=0}^{ [n/2]} 
\left(
\frac{[ 2 (1-\eta )/(a (1+\eta ) )  ]^{n/2 - j}   }{j! (n-2 j)!}
\right)}
{\sum _{m=0}^{[n/2]} 
\left(\frac{( a/2 )^{m}}{m! (n-2 m)!} 
\right)} \,.
\end{equation}
Therefore, Eqs.(\ref{xnfat3}), (\ref{xn2}) and (\ref{polygeral}) give 
the explicit expressions for $x_n!$, $x_n$ and $p_n(\eta)$. 
The first coefficients $x_n$, the polynomials $p_n(\eta)$, and respective figures are shown in 
the Appendix.

%%%%%%%%%%%%%%%%%%%%%%%%%%
\section{Possible applications}
\label{applications}

The formalism developed in this article 
 can be useful in many areas of physics and in other scientific fields. 
 The usual binomial 
 and Poisson laws are in fact strictly obeyed by systems where correlations are absent.  But in physical processes where the correlations are strong enough, which are processes found everywhere, those laws are not applicable. We claim that our formalism naturally takes these correlations into account. 
Strong correlations also change the average values of 
the probability of having ``win'' or ``loss'', and a discussion about how these averages are modified can be found in \cite{curadoetal2011}. 

As an example, 
our formalism could be applied to quantum optics and atomic physics, where the  Poisson distribution, on which the construction of Glauber coherent states is based, can be 
found experimentally deformed.
Such deformations may be of sub-Poissonian or super-Poissonian type 
\cite{dodonov2002,matosvogel1996,vogeletal2001,loudon}. We know 
that the deformations of the Poissonian law can be associated with the so-called nonlinear coherent states and can be represented by the deformed 
exponentials $\mathcal{N}(t)$, extensively discussed in this work. A special 
type of nonlinear coherent states, corresponding to various deformed exponentials, was actually associated with the trapped-atom motion
\cite{matosvogel1996,vogeletal2001}. 

Another example is found in quantum measurement, where the information carriers can be associated with quantum states and measurements with operators. If the quantum states carrying information are not orthogonal, no measurement can distinguish between overlapping quantum states without some ambiguity, implying that errors are unavoidable. On the other side, 
codifying information by means of nonorthogonal quantum states has some advantages, as 
it is known that a classical information capacity of a noisy channel is actually maximized by a nonorthogonal alphabet. This  
justifies the development of a quantum information  
formalism 
based on nonorthogonal states. It is known that when one tries to distinguish between two nonorthogonal states through some receiver device, there always exists a quantum error probability, whose quantum limit is given by the so-called Helstrom bound, which is the smallest physically allowable error probability, taking into account the overlap between two states. 
In an alphabet consisting only of two words (two coherent states, for example), 
 the expression for the quantum error probability (Helstrom bound) of two overlapping states $ \vert \Psi_0 \rangle$ and $ \vert \Psi_1 \rangle$ 
 is given by $P_H = (1/2) \left(  
1-\sqrt{
1- \vert \langle\Psi_1 \vert \Psi_0 \rangle \vert^2  
 } 
\right)$, 
if the probabilities of the sender to transmit the message associated with $ \vert \Psi_0 \rangle$ and $ \vert \Psi_1 \rangle$  are equal.  
It is also 
known that an imperfect detection, due to a non-ideal photodetector, which is always the case in real measurements, modifies the theoretical probability to detect $n$ photons. Now  
 a non-Poissonian law could be fitted with a deformed exponential $\mathcal{N}(t)  = \sum_n t^n / x_n!$ (associated with a non-Poissonian law $ t^n / (\mathcal{N}(t) x_n!) $) so that we deal with a sequence of nonnegative numbers $(x_n)$  being viewed as a phenomenological spectrum. Formally, to such a phenomenological spectrum are naturally associated nonlinear coherent states given by $ | z \rangle = (1/\mathcal{N}(t)) \sum_n z^n/x_n!  |n\rangle $.  
These phenomenologically-constructed nonlinear coherent states, adjusted with the 
effective numbers $(x_n)$,  
can be set side-by-side with a corresponding deformed exponential, or a non-Poissonian law. The Helstrom bound shown above, associated with the nonlinear coherent state, is modified, and can be read now as, 
$P_H^{({\rm nonlin})} = (1/2) \left(  
1-\sqrt{
1- 1/\mathcal{N}(t)  
 } 
\right)$, 
where $\mathcal{N}(t)$ is the deformed exponential associated with the sequence of data measured (a sub or super-Poissonian distribution for example) and $\eta \in [0,1]$ is the efficiency of the detector. 
It exists, then, the theoretical possibility of lowering the Helstrom bound with respect to the 
value 
obtained using Glauber coherent states, which could be very useful in a  transmission of information. 

Beyond the examples discussed above, one should point out that the formalism developed in this article was not constructed for a specific physical system, but is very general. It could be applied everywhere, 
under the condition that important correlations are present. The examples above are examples from physics, but it is not difficult to find examples in other areas where deviations from 
Bernoulli trial 
and/or Poisson distributions have to be considered.  

\section{Conclusion}
\label{conclusions}

In this article we have developed a 
formalism 
allowing us to construct deformed 
binomial 
and Poisson distributions preserving the probabilistic interpretation intrinsically associated with these distributions. We have shown mathematically, (see theorem 2.1), under which conditions these deformations can be constructed and we have illustrated our results 
with  some key examples. These deformed distributions can be associated with examples coming from quantum optics, atomic physics and quantum information but the formalism enhanced in this article is very general, and could be used in many different domains where deviations of 
binomial 
and/or Poisson laws are observed.   In spite of the fact that we have explored practically all the important aspects of the set $\Sigma_+$, it remains the question of the existence of a larger set, containing this one, and still keeping the probabilistic interpretation, so important to the phenomenological use of the  
formalism. 
Analysis in this direction are underway.

\section*{Acknowledgments}
EMFC acknowledges the 
partial financial supports by CNPq, CAPES and FAPERJ (Brazilian scientific 
agencies).

\newpage 

\section*{Appendix}

\subsection*{Example 1}
The 7 first polynomials if $n=5$ are: 
\begin{eqnarray}
p_1^{(5)}(\eta)& = &1-\eta \\
p_2^{(5)}(\eta) & = & 1-\frac{5}{3} \eta + \frac{2}{3} \eta^2 \\
p_3^{(5)}(\eta) & = & 1-\frac{15}{7} \eta + \frac{10}{7} \eta^2 - \frac{2}{7} \eta^3 \\
p_4^{(5)}(\eta) & = & 1-\frac{5 \eta }{2}+\frac{15 \eta ^2}{7}-\frac{5 \eta ^3}{7}+\frac{\eta ^4}{14} \\
p_5^{(5)}(\eta) & = & 1-\frac{25 \eta }{9}+\frac{25 \eta ^2}{9}-\frac{25 \eta ^3}{21}+\frac{25 \eta ^4}{126}-\frac{\eta ^5}{126} \\
p_6^{(5)}(\eta) & = & 1-3 \eta +\frac{10 \eta ^2}{3}-\frac{5 \eta ^3}{3}+\frac{5 \eta ^4}{14}-\frac{\eta ^5}{42} \\
p_7^{(5)}v & = & 1-\frac{35 \eta }{11}+\frac{42 \eta ^2}{11}-\frac{70 \eta ^3}{33}+\frac{35 \eta ^4}{66}-\frac{\eta ^5}{22} \,.
\end{eqnarray}
These polinomials can be found in figure \ref{fig1}.
Notice that the higher possible power of $\eta$ in $p_k^{(n)}(\eta)$ is $n$ (here $n=5$), which happens in all cases where $k \geq n$. 
 
\begin{figure}
\begin{center}
\includegraphics[width=4in]{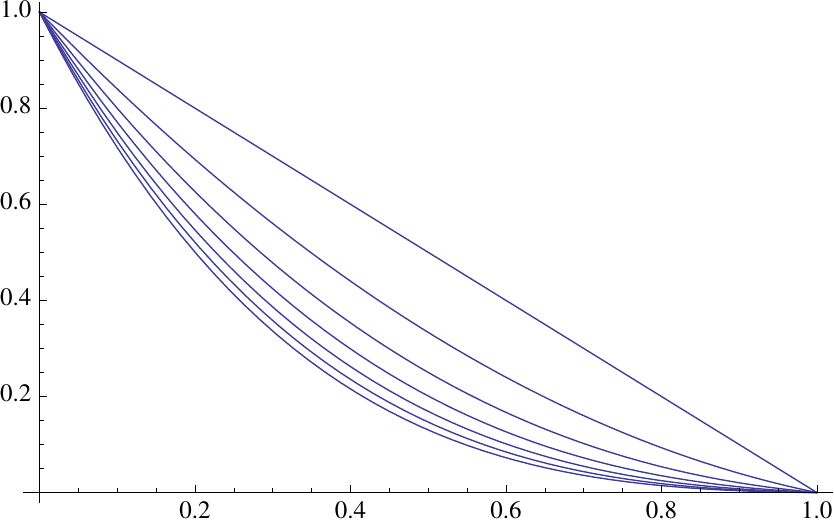}
\caption{Polynomials $p_k^{(n)}(\eta)$ for $n=5$ and $k=1, 2, 3, 4, 5, 6, 7$. The values of $k$ increase from top to bottom.}
\label{fig1}
\end{center}
\end{figure}

\subsection*{Example 2}

\noindent
The first three polynomials are 
\begin{eqnarray}
p_1(\eta)& = &1-\eta \\
p_2(\eta) & = & 1- \eta (1+ b_1) + \eta^2 b_1 \\
p_3(\eta) & = & 1- \eta (1+b_2) + \eta^2 b_2 \, ,
\end{eqnarray}
where 
$$
b_1 \equiv \frac{a_1 a_2}{a_1^2 + a_1 a_2 + a_2^2} 
\, ,
$$
and  
$$
b_2 = \frac{a_1^2 a_2 + a_1 a_2^2}{a_1^3 + a_1^2 a_2 + a_1 a_2^2+a_2^3} \, .
$$

The polynomials for $k=1, 2, 3, 4$ are shown in figures 2, 3 and 4, for the values of 
$(a_1,a_2)=(1/4,1/2), (1/3,4/3)$ and $(5/4,6/4)$ respectively. 
\begin{figure}
\begin{center}
\includegraphics[width=4in]{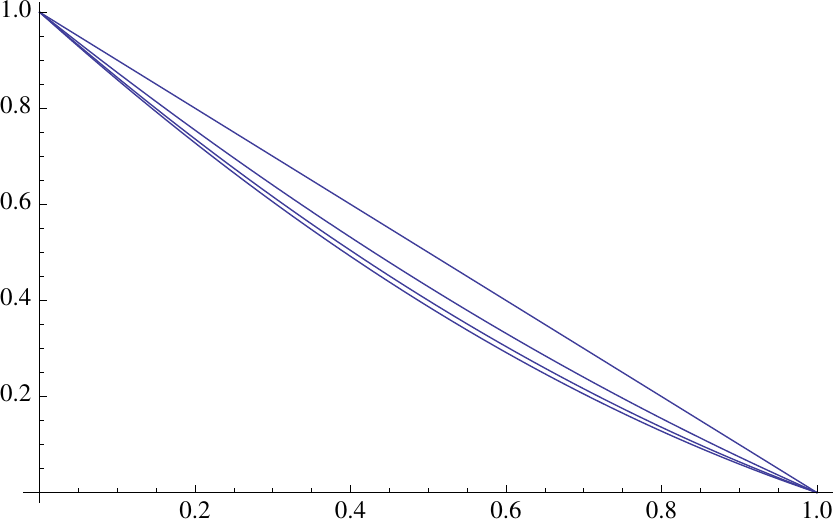}
\caption{Polynomials $p_k(\eta)$ for $k=1, 2, 3, 4$ and 
$(a_1,a_2)=(1/4,1/2)$. The values of $k$ increase from top to bottom.}
\label{fig2}
\end{center}
\end{figure}

\begin{figure}
\begin{center}
\includegraphics[width=4in]{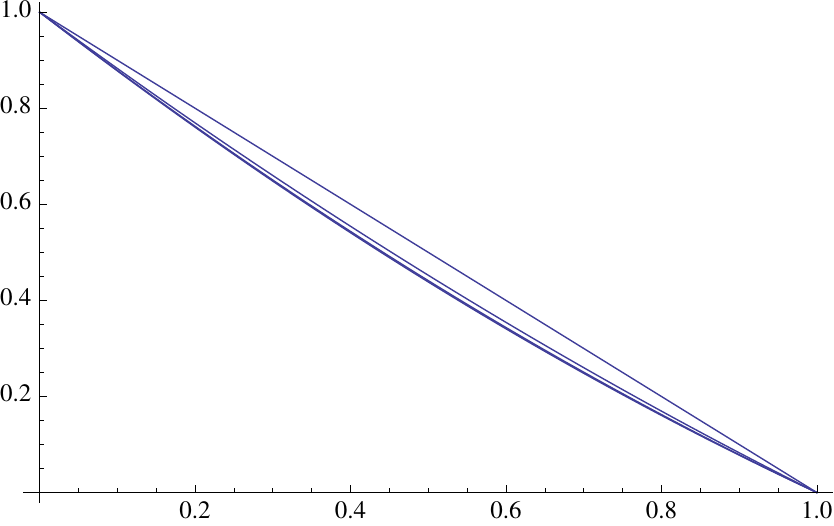}
\caption{Polynomials $p_k(\eta)$ for $k=1, 2, 3, 4$ and $(a_1,a_2)=(1/3,4/3)$. The values of $k$ increase from top to bottom.}
\label{fig3}
\end{center}
\end{figure}

\begin{figure}
\begin{center}
\includegraphics[width=4in]{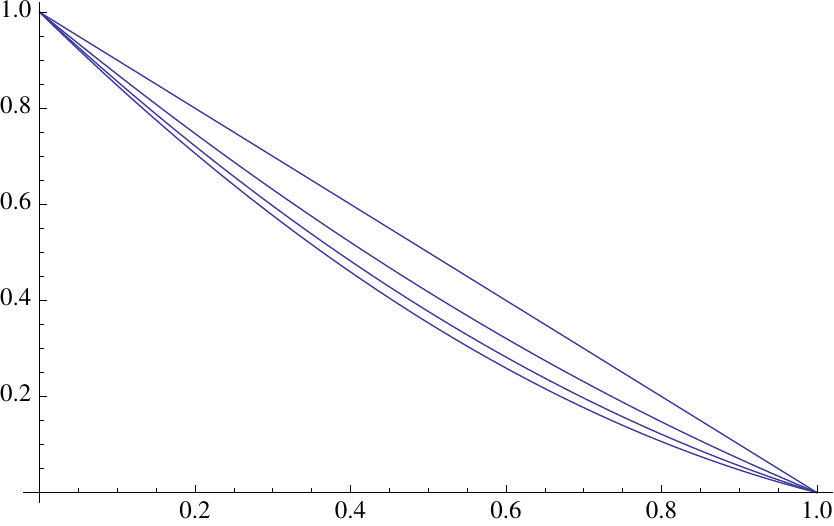}
\caption{Polynomials $p_k(\eta)$ for $k=1, 2, 3, 4$ and $(a_1,a_2)=(5/4,6/4)$. The values of $k$ increase from top to bottom.}
\label{fig4}
\end{center}
\end{figure}

\subsection*{Example 3}

With the formulas given by 
Eqs.(\ref{xnfat3}), (\ref{xn2}) and (\ref{polygeral}) 
 we can calculate the coefficients and polynomials. The
first five coefficients $x_n$ are: 
$x_0=0, x_1=1, x_2 = 2/(1+a), x_3 = 3 a / (1+ 3 a), x_4 = 4(1+3a)/(1+6a + 3 a^2), x_5 = 5(1+6a+3 a^2)/(1+10a+15 a^2)$ and the first polynomials $p_n(\eta )$ for $n = 2, 3, 4$ and $5$ are (as in all the other examples, $p_0(\eta)=1$ and $p_1(\eta) = 1-\eta$): 
\begin{eqnarray}
p_2(\eta) & = & 1-\frac{2  }{1+a} \eta+\frac{(1-a)}{1+a}  \eta^2 \\
p_3(\eta) & = & 1-\frac{3(1+ a) }{1+3 a}  \eta +\frac{3(1- a) }{1+3 a}\eta^2 
- \frac{(1-3 a) }{1+3 a} \eta^3 \\
p_4(\eta) & = &  1-\frac{4(1+3 a)  }{1+6 a+3 a^2}\eta + 
\frac{6 \left(1- a^2\right) }{1+6 a+3 a^2}\eta^2  
- \frac{4(1-3 a) }{1+6 a+3 a^2}\eta^3  \\
  &  &  + \,  \frac{\left(1-6 a+3 a^2\right) }{1+6 a+3 a^2}\eta^4 \\
p_5(\eta) & = & 
1- \frac{5\left(1+6 a+3 a^2\right)  }{1+10 a+15 a^2}\eta + 
\frac{10 \left(1+2 a-3 a^2\right) }{1+10 a+15 a^2} \eta^2  \\
&& - \, \frac{10 \left(1-2 a-3 a^2\right) }{1+10 a+15 a^2} \eta^3 + 
\frac{5 \left(1-6 a+3 a^2\right) }{1+10 a+15 a^2}\eta^4  \\
&& - \, \frac{ \left(1-10 a+15 a^2\right) }{1+10 a+15 a^2}\eta^5   \, .
\end{eqnarray}
Notice that for $a>0$ the coefficients $x_n$ are always smaller than $n$ (since $x_1 = 1$), 
as shown theoretically before in corollary 2.2, Eq. (\ref{xnmax}). The polynomials for $n=1, 2,3, 4, 5$ are plotted in Fig. \ref{figt2} for a=1/2.

\begin{figure}
\begin{center}
\includegraphics[width=8cm]{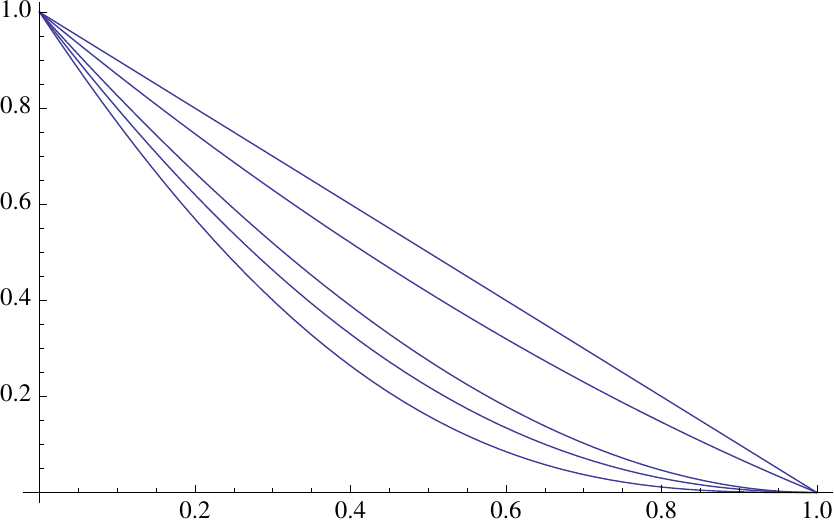}
\caption{Polynomials ($n = 1, 2, 3, 4, 5$) for $\mathcal{N}(t) = \exp(t + a t^2/2)$,  with $a=1/2$. $n$ increases from top to bottom. }
\label{figt2}
\end{center}
\end{figure}

\end{document}